\newcommand{\myskip}{\vspace{2pt}}
\newcommand{\tabincell}[2]{\begin{tabular}{@{}#1@{}}#2\end{tabular}}
\definecolor{mygray}{gray}{.9}
\definecolor{myemph}{rgb}{1,0.73,0.74}
\newcommand{\G}{\mathcal{G}}
\newcommand{\A}{\mathbf{A}}
\newcommand{\HH}{\mathbf{I}}
\newcommand{\F}{\mathbf{F}}
\newcommand{\M}{\mathbf{M}}
\newcommand{\0}{\mathbf{0}}
\newcommand{\Y}{\mathbf{Y}}
\newcommand{\figwidth}{3.3in}
\newcommand{\algswidth}{\columnwidth}
\newcommand{\et}{
\setlength{\abovedisplayskip}{-0.5pt}
\setlength{\belowdisplayskip}{-0.5pt}
\setlength{\belowdisplayshortskip}{-0.5pt}
\setlength{\abovedisplayshortskip}{-0.5pt}
}
\newtheorem{lemma}{Lemma}
\begin{document}

\title{QuickIM: Efficient, Accurate and Robust Influence Maximization Algorithm
on Billion-Scale Networks}

\numberofauthors{1}
\author{
\alignauthor
Rong Zhu, Zhaonian Zou, Yue Han, Sheng Yang, and Jianzhong Li\\
       \affaddr{Harbin Institute of Technology, Harbin, China}\\
       \email{\{rzhu, znzou, yuehan, yangsheng, lijzh\}@hit.edu.cn}
}

\maketitle

\begin{abstract}
The Influence Maximization (IM) problem aims at finding $k$ seed vertices in a network, starting from which influence can be spread in the network to the maximum extent.
In this paper, we propose \textsf{QuickIM}, the first versatile IM algorithm that attains all the desirable properties of a practically applicable IM algorithm at the same time, namely high time efficiency, good result quality, low memory footprint, and high robustness.
On real-world social networks, \textsf{QuickIM} achieves the $\Omega(n + m)$ lower bound on time complexity and $\Omega(n)$ space complexity, where $n$ and $m$ are the number of vertices and edges in the network, respectively.
Our experimental evaluation verifies the superiority of \textsf{QuickIM}.
Firstly, \textsf{QuickIM} runs 1--3 orders of magnitude faster than the state-of-the-art IM algorithms.
Secondly, except \textsf{EasyIM}, \textsf{QuickIM} requires 1--2 orders of magnitude less memory than the state-of-the-art algorithms.
Thirdly, \textsf{QuickIM} always produces as good quality results as the state-of-the-art algorithms.
Lastly, the time and the memory performance of \textsf{QuickIM} is independent of influence probabilities.
On the largest network used in the experiments that contains more than 3.6 billion edges, \textsf{QuickIM} is able to find hundreds of influential seeds in less than 4 minutes, while all the state-of-the-art algorithms fail to terminate in an hour.
\end{abstract}

\vspace{-0.5em}

\section{Introduction}
\label{Sec: IntDuc}

Given a social network $G$, a budget $k \in \mathbb{N}$ and an influence diffusion model, the \emph{Influence Maximization (IM)} problem identifies $k$ vertices in $G$, called ``seeds'', that can influence the largest number of vertices according to the diffusion model. The IM problem paves the way for many real-world applications, e.g., viral marketing~\cite{Aslay2014Viral, Richardson2002Mining}, recommendation~\cite{Chaoji2012Recommendations}, rumor blocking~\cite{Tong2017An}, epidemic prevention~\cite{Tong2015Adaptive}, and so on.

Kempe et al.~\cite{Kempe2003Maximizing} first formalize the well-known \emph{Independent Cascade (IC)} model. Although the IM problem is NP-hard under the IC model, it can be solved by a simple greedy algorithm with an approximation ratio of $1 - 1/e - \epsilon$, where $e$ is the base of natural logarithms, and $\epsilon \in (0, 1)$ is a small number. After that, a vast number of algorithms have been developed to efficiently and accurately solve the IM problem under the IC model.
These algorithms can be categorized into three groups: \emph{the simulation-based algorithms}~\cite{Cheng2013StaticGreedy, Goyal2011CELF, Kempe2003Maximizing, Leskovec2007Cost, Ohsaka2014Fast}, \emph{the reverse sampling algorithms}~\cite{Borgs2012Maximizing, Huang2017Revisiting, Nguyen2017Importance, Nguyen2016Stop, Ohsaka2017Coarsening, Tang2015Influence, Tang2014Influence, Wang2017Bring} and \emph{the score estimation algorithms}~\cite{Chen2010Scalable, Galhotra2016Holistic, Goyal2012SIMPATH, Jung2013IRIE}. We will review these algorithms in Section~3.

As a consensus~\cite{Arora2017Debunking, Li2018Influence}, a desirable IM algorithm that is applicable in practice should attain four properties at the same time: \textbf{\em 1) high time efficiency}, \textbf{\em 2) good result quality}, \textbf{\em 3) low memory footprint}, and \textbf{\em 4) high robustness.}
However, according to the benchmarking study~\cite{Arora2017Debunking}, none of the existing IM algorithms can satisfy all these criteria at the same time.
In particular, the simulation-based algorithms are known to be computationally expensive.
They often run in $O(knm\epsilon^{-2})$ time, where $n$ and $m$ are the number of vertices and edges in the network, respectively, and $\epsilon \in (0, 1)$.
The reverse sampling algorithms have to store all samples in the main memory for seed selection, so their memory overheads are often very high.
The score estimation algorithms either take an enormous amount of time or produce results of low quality.

The robustness evaluation in the literature is inadequate because it only focuses on the performance of IM algorithms on different social networks with various structures~\cite{Arora2017Debunking, Galhotra2016Holistic, Nguyen2016Stop, Tang2015Influence, Tang2014Influence}.
In addition to this, it is also important to evaluate IM algorithms for various influence probability settings because influence probabilities are key components of influence networks, and they can directly affect the performance of IM algorithms.
As verified by our experiments, the execution time and the memory footprint of the reverse sampling algorithms are very sensitive to influence probabilities because their sample size grows exponentially as influence probabilities become larger.

In Section~3, we carry out a detailed study on how the existing IM algorithms satisfy the four properties and summarize our evaluation results in Table~\ref{Tab: IMAlgCmp}. This study motivates us to design a versatile IM algorithm that is fast, accurate, memory-efficient, and robust.

%

In this paper, we propose a new IM algorithm, called \textsf{QuickIM}.
To the best of our knowledge, \textsf{QuickIM} is the first IM algorithm that attains all the four desirable properties and is able to solve the IM problem on a network with billions of edges in several minutes.
In essence, \textsf{QuickIM} is a score estimation algorithm. It estimates the influence of every vertex by a score that is very easy to compute. In every iteration of the algorithm, it selects the vertex with the highest score as a new seed, removes it from the network and re-computes the scores of all remaining vertices. It terminates when $k$ seeds have been found.

Unlike the traditional score estimation algorithms, \textsf{QuickIM} is extremely fast and accurate. This results from two key techniques. First, \textsf{QuickIM} estimates the influence of a vertex by aggregating the probabilities of walks starting from the vertex. This score function provides a good estimate of the influence of a vertex. Second, the score of a vertex can be updated incrementally by accessing its $L$-hop neighborhood rather than visiting the entire network, where $L = 3$ is sufficient to yield highly accurate results.
The time complexity of \textsf{QuickIM} is $O(Lm + kLn + kL^{2}n'+ k\Delta^{L})$, where $n$ is the number of vertices, $m$ is the number of edges, $\Delta$ is the average in-degree of vertices, and $n'$ is far less than $n$. Since $L$ and $\Delta$ are often very small for real-world social networks, for fixed $k$,  \textsf{QuickIM} attains the $\Omega(m+n)$ lower bound on the time complexity of an IM algorithm~\cite{Borgs2012Maximizing}.
The space complexity of \textsf{QuickIM} is $O(Ln + k\Delta^{L})$, which is close to $\Omega(n)$ on real social networks.
Moreover, \textsf{QuickIM} only carries out two kinds of primitive operations, namely graph traversal and arithmetic computations. Hence, its time and space overheads are totally independent of the influence probabilities of edges in a network.

We compared \textsf{QuickIM} with the state-of-the-art IM algorithms on a variety of real social networks. 
The experimental results verify that \textsf{QuickIM} attains all the four desirable properties:

\begin{itemize}
\item {\textbf{\textsf{Fast:}}}
\textsf{QuickIM} is 1--3 orders of magnitude faster than the state-of-the-art IM algorithms.
On the largest network \textsl{Friendster} that contains more than 3.6 billion edges, \textsf{QuickIM} is able to find 100 most influential users in less than 4 minutes, while all the existing algorithms cannot terminate in an hour.

\item {\textbf{\textsf{Accurate:}}}
\textsf{QuickIM} is able to produce as good quality results as the state-of-the-art algorithms in terms of the expected fraction of influenced vertices.
The differences are all less than 0.5\% and less than 0.1\% in most cases.

\item {\textbf{\textsf{Memory-Efficient:}}}
Except \textsf{EasyIM}~\cite{Galhotra2016Holistic}, \textsf{QuickIM} requires 1--2 orders of magnitude less memory than the state-of-the-art IM algorithms. For the largest network \textsl{Friendster} in our experiments, \textsf{QuickIM} only requires 3GB of main memory in addition to the main memory for storing the network.

\item{\textbf{\textsf{Robust:}}}
The time and memory performance of \textsf{QuickIM} is very stable no matter how influence probabilities are varied.
However, the simulation-based and the reverse sampling IM algorithms are very sensitive to influence probabilities.
\end{itemize}

The rest of the paper is organized as follows:
Section~2 introduces the basic concepts and formalizes the IM problem.
Section~3 reviews the existing IM algorithms.
Sections~4 and~5 present the key techniques of \textsf{QuickIM}, namely score estimation and score updating.
Section~6 describes the procedure of \textsf{QuickIM}.
Section~7 reports the experimental evaluation. Section 8 concludes the paper.


\section{Preliminaries}
\label{Sec: Prelim}

In this section, we introduce some basic notations and concepts and formalize the Influence Maximization (\textsc{IM}) problem.

\myskip
\noindent{\textbf{\underline{Influence Networks.}}}
An \emph{influence network} is modeled as a directed and weighted graph $G = (V, E, P)$, where $V$ is a set of vertices, $E \subseteq V \times V$ is a set of edges, and
$P: E \to (0, 1]$ is a function assigning each edge $(u, v)$ with an \emph{influence probability} $P(u, v)$, i.e., the likelihood that vertex $u$ successfully influences vertex $v$.
Let $V(G)$, $E(G)$ and $P_{G}$ denote the vertex set, edge set and influence probability function of graph $G$, respectively.
Let $n = |V(G)|$ and $m = |E(G)|$.
For each edge $(u, v) \in E(G)$, $u$ is an in-neighbor of $v$, and $v$ is an out-neighbor of $u$.
Let $N^{I}_{G}(v)$ and $N^{O}_{G}(v)$ be the set of in-neighbors and out-neighbors of vertex $v$ in graph $G$, respectively. Let $d^{I}_{G}(v) = |N^{I}_{G}(v)|$ and $d^{O}_{G}(v) = |N^{O}_{G}(v)|$ be the in-degree and out-degree of $v$ in $G$, respectively.

\myskip
\noindent{\textbf{\underline{Influence Diffusion Process.}}}
Let $S \subseteq V(G)$ be a set of seed vertices. Starting from $S$, an influence diffusion process under the \emph{Independent Cascade (IC) model}~\cite{Kempe2003Maximizing} can be described as the following discrete-time stochastic process:

\begin{enumerate}\setlength{\itemsep}{0pt}\setlength{\parsep}{0pt}
\item At step $0$, we set all the seed vertices in $S$ to be \emph{active} and set all the other vertices to be \emph{inactive}. Once a vertex is activated, it remains to be active in subsequent steps.

\item At step $i$ $(i \geq 1)$, every vertex $u$ whose state was changed from inactive to active at step $i - 1$ has only one chance to activate each of its inactive out-neighbors $v$ with probability $P_{G}(u, v)$. If $u$ fails to activate $v$, $u$ can never activate $v$ in subsequent steps.

\item The influence diffusion process repeats until no more vertices can be activated.
\end{enumerate}

Let $t(S)$ be the number of active vertices when the influence diffusion process terminates. Of course, $t(S)$ is a random variable.
Therefore, we use $\mathbb{E}[t(S)]$, the expected value of $t(S)$, to evaluate the \emph{influence} of $S$, which is denoted by $I_{G}(S)$ for simplicity.

\myskip
\noindent{\textbf{\underline{Reformulation under the Possible World Model.}}}
To better understand $I_{G}(S)$, we reformulate $I_{G}(S)$ based on the well-known ``\emph{Possible World Model}'' of uncertain data~\cite{Li2012Mining, Zhu2017SimRank}. The influence network $G$ is regarded as an \emph{uncertain graph}~\cite{Li2012Mining, Khan2016Towards, Zhu2015Top, Zhu2017SimRank} at this time, where influence probability $P_G(u, v)$ is regarded as the probability that edge $(u, v)$ exists in practice.
A \emph{possible world} of $G$ is a graph obtained by instantiating each edge $e \in E(G)$ independently with probability $P_G(e)$.
Therefore, a specific possible world $g$ can be obtained from $G$ with probability
\begin{equation*}
\et
\Pr(g) = \prod_{e \in E(g)} P_{G}(e) \prod_{e \in E(G) - E(g)} 1 - P_{G}(e).
\end{equation*}
Let $\G$ be the set of all possible worlds of $G$. We can easily verify that $\G$ forms a probability space because $\sum_{g \in \G} \Pr(g) = 1$.

There is a bijective mapping from $\G$ to all instances of the influence diffusion process:
Let $g$ be a possible world of $G$.
We have a corresponding instance of the diffusion process in which the attempt that vertex $u$ try to activate vertex $v$ is successful if and only if $(u, v) \in E(g)$. Obviously, vertex $v$ is activated at step $t$ if and only if there exists a shortest path of length $t$ from a seed $s \in S$ to $v$ on $g$.
Therefore, a vertex $v$ can be activated by the seeds $S$ in this instance of diffusion process if and only if $v$ is reachable from $S$ on $g$.
Let $I_{G}(S, v)$ denote the probability that vertex $v$ is active at the end of an influence diffusion process that starts from seeds $S$. According to the bijection described above, we have
\begin{equation}
\et
\label{Eqn: IGSv}
I_G(S, v) = \sum_{g \in \G | v \text{ is reachable from } S \text{ on } g} \Pr(g).
\end{equation}
Let $R_g(S)$ denote the set of vertices reachable from $S$ on possible world $g$. We can rewrite the influence $I_{G}(S)$ of $S$ as follows:
\begin{equation}
\et
\label{Eqn: IsERgS}
I_{G}(S) = \!\!\!\!\! \sum_{v \in V(G)} \!\!\! I_{G}(S, v) = \sum_{g \in \G} \Pr(g) |R_g(S)| = \mathbb{E}[|R_g(S)|].
\end{equation}

\myskip
\noindent{\textbf{\underline{The Influence Maximization (IM) Problem.}}}
Given an influence network $G$ and the budget number of seeds $k$, the IM problem under the IC model asks for the set $S^*$ of $k$ seed vertices such that $I_{G}(S^*)$ is maximized.
It has been proven that the IM problem under the IC model is NP-hard~\cite{Kempe2003Maximizing}.


\section{Existing IM Algorithms}
\label{Sec: RevAlg}

In this section, we revisit some representative IM algorithms in the literature and show their advantages and limitations.
This study not only gives us a deep insight into the existing work on IM but also guides us to design new versatile IM algorithms.

\myskip
\noindent{\textbf{\underline{Categories of IM Algorithms.}}}
Despite of the hardness of the IM problem, a large number of algorithms with or without performance guarantees have been proposed to find a suboptimal set of seeds.
Almost all these algorithms follow a greedy framework that was first proposed by Kempe et al.~\cite{Kempe2003Maximizing}. This framework exploits the following useful property: $I_{G}(S)$ is a non-decreasing \emph{sub-modular} function of $S$~\cite{Calinescu2007Maximizing}, that is, for all $S \subseteq T$ and all $v \notin T$, $I_{G}(S \cup \{v\}) - I_{G}(S) \geq I_{G}(T \cup \{v\}) - I_{G}(T)$.
According to this property, the framework adopts the following greedy strategy to find a suboptimal solution:
Generally, the algorithm starts from an empty seed set $S$ and iteratively adds to $S$ the vertex $v$ that maximizes the marginal gain $I_{G}(S \cup \{v\}) - I_{G}(S)$ until $|S| = k$.
The most critical difference between these IM algorithms is how to overcome the \#P-hardness of computing $I_G(S)$.
Therefore, we categorize the representative IM algorithms into the three collections:

\begin{itemize}
\item{\textit{Simulation-based Algorithms}:}~\textsf{GREEDY}~\cite{Kempe2003Maximizing}, \textsf{CELF}~\cite{Leskovec2007Cost}, \break \textsf{CELF++}~\cite{Goyal2011CELF}, \textsf{StaticGreedy}~\cite{Cheng2013StaticGreedy}, and \textsf{PrunedMC}~\cite{Ohsaka2014Fast}.

\item{\textit{Reverse Sampling Algorithms}:} \textsf{RIS}~\cite{Borgs2012Maximizing}, \textsf{TIM/TIM+}~\cite{Tang2014Influence}, \textsf{IMM}~\cite{Tang2015Influence}, \textsf{SSA/D-SSA}~\cite{Nguyen2016Stop}, \textsf{SKIS}~\cite{Nguyen2017Importance}, and \textsf{Coarsen}~\cite{Ohsaka2017Coarsening}.

\item{\textit{Score Estimation Algorithms}:} \textsf{IRIE}~\cite{Jung2013IRIE} and \textsf{EasyIM}~\cite{Galhotra2016Holistic}.
\end{itemize}

The simulation-based and the reverse sampling algorithms use sampling methods to approximate the influence $I_{G}(S)$ within provable errors.
The score estimation algorithms estimate $I_{G}(S)$ using some heuristic score functions that are easy to compute and can distinguish vertices of high influence from those of low influence. Other algorithms~\cite{Chen2009Efficient, Chen2010Scalable, Chen2011Scalable, Cheng2014IMRank, Cohen2014Sketch, Goyal2012SIMPATH, Li2018Influence, Liu2014Influence} are not included in our study either due to their poor performance or to their inapplicability to the IC model.

\myskip
\noindent{\textbf{\underline{Evaluation Criteria.}}}
As pointed in~\cite{Arora2017Debunking, Li2018Influence}, a desirable IM algorithm should attain four properties at the same time, namely \textbf{\emph{high time efficiency}}, \textbf{\emph{good result quality}}, \textbf{\emph{low memory footprint}}, and \textbf{\emph{high robustness}}. The first three criteria have already been well recognized and widely adopted to evaluate IM algorithms~\cite{Arora2017Debunking, Galhotra2016Holistic, Nguyen2016Stop, Tang2015Influence, Tang2014Influence}.
Nevertheless, the robustness evaluation is still inadequate in the literature as it only focuses on evaluating the performance of an IM algorithm on different social networks with various structures~\cite{Arora2017Debunking, Galhotra2016Holistic, Nguyen2016Stop, Tang2015Influence, Tang2014Influence}.
Apart from the robustness to structural properties, it is also important to evaluate the performance of an IM algorithm with respect to different influence probability settings because influence probabilities are key elements of an influence network. Influence probabilities can directly affect the performance of influence computations. As verified by our experiments, the simulation-based and the reverse sampling IM algorithms are very sensitive to influence probabilities because they sample an edge according to its influence probability. 
To the best of our knowledge, this paper is the first one to evaluate the robustness of IM algorithms in this sense.

In Sections~\ref{Sec: RevAlg-1}--\ref{Sec: RevAlg-3}, we present a comprehensive evaluation of each category of IM algorithms listed above according to the four criteria.
Our main findings will be summarized in Section~\ref{Sec: RevAlg-4}.

\begin{table*}[t]
	\centering
    \caption{Comparisons of IM Algorithms.}
    \resizebox{\textwidth}{!}
    {
    \begin{tabular}{c|c|c|c|c|c|c||cccc}
    	\hline
        \rowcolor{mygray}
         {\bf{Algorithm}} & {\bf{IM}} & {\bf{Expected Time}} & {\bf{Worst-Case Time}} & {\bf{Space}}  & {\bf{Approximation}} & {\bf{Explanation of}} & {\bf{Time}} &{\bf{ Result}}  & {\bf{Memory}} & \bf{} \\

         \rowcolor{mygray}
         {\bf{Category}}& {\bf{Algorithm}} & {\bf{Complexity}} & {\bf{Complexity}} & {\bf{Complexity}} & {\bf{Ratio}} & {\bf{Parameters}} & {\bf{Efficiency}} &{\bf{Quality}}  & {\bf{Footprint}} & \bf{Robustness} \\ \hline

            & \textsf{Greedy}~\cite{Kempe2003Maximizing} & \multirow{3}{*}{\tabincell{c}{$O(kn\zeta_{s}\theta_{s})$ or $O(kn\theta_{s} d_{O}^{D})$ }}  & \multirow{3}{*}{$O(knm\theta_{s})$} & \multirow{3}{*}{$O(m)$} & \multirow{3}{*}{\tabincell{c}{$1 - 1/e - \epsilon$ \\ in theory}} & \multirow{3}{*}{$\theta_{s} = O(\epsilon^{-2}k^2n\log(n^2k))$} & \multirow{3}{*}{\XSolid} & \multirow{3}{*}{\CheckmarkBold} & \multirow{3}{*}{\CheckmarkBold} & \multirow{3}{*}{\XSolid} \\ \cline{2-2}

            Simulation- & \textsf{CELF~}\cite{Leskovec2007Cost} & & & & &  & & & \\ \cline{2-2}

            based& \textsf{CELF++~}\cite{Goyal2011CELF} & & & & & &  & & & \\  \cline{2-11}

            Algorithms & \textsf{StaticGreedy}~\cite{Cheng2013StaticGreedy} & \multirow{2}{*}{\tabincell{c}{$O(kn\zeta_{s}\theta_{s})$ or $O(kn\theta_{s} d_{O}^{D})$}}& \multirow{2}{*}{$O(knm\theta_{s})$} & \multirow{2}{*}{$O(m\theta_s)$} & \multirow{2}{*}{\tabincell{c}{$1 - 1/e - \epsilon$ \\ in theory}} & \multirow{2}{*}{$\theta_{s} = O(\epsilon^{-2}n\log{n \choose k})$ } & \multirow{2}{*}{\XSolid} & \multirow{2}{*}{\CheckmarkBold} & \multirow{2}{*}{\XSolid} & \multirow{2}{*}{\XSolid} \\ \cline{2-2}

            & \textsf{PrunedMC}~\cite{Ohsaka2014Fast} & & & & &  & & & \\ \hline

            \multirow{11}{*}{ \tabincell{c}{Reverse \\ Sampling \\ Algorithms}} & \textsf{RIS}~\cite{Borgs2012Maximizing} & \tabincell{c}{$O(\epsilon^{-3} k(m+n)\log^2n)$ \\ or $O(\theta_{r} d_{I}^{D})$} & $O(\theta_{r}m)$ & $O(\theta_{r}n)$ & \tabincell{c}{$1 - 1/e - \epsilon$ \\ in theory} & \tabincell{c}{the expected number of $\theta_{r}$ is \\ $O(\epsilon^{-2} n (\log n + \log {n \choose k} ) \textsf{OPT}_k^{-1})$} & \multirow{1}{*} {\XSolid} & \multirow{1}{*}{\CheckmarkBold} & \multirow{1}{*}{\XSolid} & \multirow{1}{*}{\XSolid} \\ \cline{2-11}

            & \textsf{TIM/TIM+}~\cite{Tang2014Influence} & \multirow{4}{*}{\tabincell{c}{$O(\epsilon^{-2} k(m+n)\log n )$ \\ or $O(\theta_{r} d_{I}^{D})$}} &  \multirow{4}{*}{$O(\theta_{r}m)$} & \multirow{4}{*}{$O(\theta_{r}n)$} & \multirow{4}{*}{\tabincell{c}{$1 - 1/e - \epsilon$ \\ in theory}} &  \tabincell{c}{the expected number of $\theta_{r}$ is \\ $O(\epsilon^{-2} n \textsf{OPT}_k^{-1} ((1-1/e)\alpha + \beta)^{2} )$} & \multirow{4}{*}{\XSolid} & \multirow{4}{*}{\CheckmarkBold} & \multirow{4}{*}{\XSolid} & \multirow{4}{*}{\XSolid}  \\\cline{2-2}

            & \textsf{IMM}~\cite{Tang2015Influence} & & & & & \tabincell{c}{where $\alpha = O(\log^{1/2} n)$ and \\ $\beta = O(({\log n + \log {n \choose k}})^{1/2})$}  & & & & \\ \cline{2-11}

           & \textsf{SSA/D-SSA}~\cite{Nguyen2016Stop} & $O(\theta_{r} d_{I}^{D})$  & $O(\theta_{r}m)$ & $O(\theta_{r}n)$ & \tabincell{c}{$1 - 1/e - \epsilon$ \\ in theory} & $\theta_{r}$ is less than that of \textsf{IMM} & {\CheckmarkBold} & {\CheckmarkBold} & {\XSolid} & {\XSolid} \\ \cline{2-11}

           & \textsf{SKIS}~\cite{Nguyen2017Importance} & $O(\theta_{r} d_{I}^{D})$ & $O(\theta_{r}m)$ & $O(\theta_{r}n)$ & \tabincell{c}{$1 - 1/e - \epsilon$ \\ in theory} & $\theta_{r}$ is less than that of \textsf{SSA/D-SSA}  & {\CheckmarkBold} & {\CheckmarkBold} & {\XSolid} & {\XSolid} \\ \cline{2-11}

           & \textsf{Coarsen}~\cite{Ohsaka2017Coarsening} & \tabincell{c}{$O(r(m+n) + {\theta}_{r}' {d}_{I}'^{D'})$} & $O(r(m+n) + \theta_{r}' m)$ & $O(m' +n' + \theta_{r}'n)$ & \tabincell{c}{$\gamma(1 - 1/e - \epsilon)$ \\ in theory} & \tabincell{c}{$r \in \mathbb{N}$ is a small input parameter; \\ $\gamma$ is decided by the graph $G$ and $k$; \\ ${\theta}_{r}'$, $m'$, $n'$, ${d}_{I}'$ and $D'$ have the same \\ meaning on the coarsened graph}   & \multirow{1}{*}{\CheckmarkBold} & \multirow{1}{*}{\CheckmarkBold} & \multirow{1}{*}{\XSolid} & \multirow{1}{*}{\XSolid} \\ \hline

           {\tabincell{c}{Score \\ Estimation}} & \textsf{IRIE}~\cite{Jung2013IRIE} & $O(k(kn_{i} + m))$ & $O(k(kn + m))$ & $O(n)$ & Low in practice & \tabincell{c}{$n_i$ is the expected number of $n_i(v)$; \\ $n_i(v) \leq n$ for each vertex $v$} &  \multirow{1}{*}{\XSolid} & \multirow{1}{*}{\XSolid} & \multirow{1}{*}{\CheckmarkBold} & \multirow{1}{*}{\CheckmarkBold} \\\cline{2-11}

         Algorithms & \textsf{EasyIM}~\cite{Galhotra2016Holistic} & $O(kL(m+n))$ & $O(kL(m+n))$ & $O(n)$ & High in practice & $L$ is the maximum length of paths & {\XSolid} & {\CheckmarkBold} &  {\CheckmarkBold} & {\CheckmarkBold} \\ \cline{2-11}

            \rowcolor{myemph}
             & \tabincell{c}{\textsf{QuickIM} \\ (This paper)} &  \tabincell{c}{$O(Lm + kLn + kL^{2}n'+ k\Delta^{L})$ \\ $\Omega(m + n)$ in practice} & $O(kLm + kL^{2}n')$ & \tabincell{c}{$O(Ln + k\Delta^{L})$ \\ $\Omega(n)$ in practice} & \tabincell{c}{High in practice} & \tabincell{c}{$L$ is the maximum length of walks; \\ $\Delta$ is average in-degree of all vertices; \\$n'$ is a number far less than $n$} &   \multirow{1}{*}{\CheckmarkBold} & \multirow{1}{*}{\CheckmarkBold} &  \multirow{1}{*}{\CheckmarkBold} & \multirow{1}{*}{\CheckmarkBold} \\ \hline
    \end{tabular}
    }
	\vspace{-2em}
	\label{Tab: IMAlgCmp}
\end{table*}

\subsection{Simulation-based Algorithms}
\label{Sec: RevAlg-1}

\myskip
In the simulation-based algorithms, the influence $I_G(S)$ of seeds $S$ is estimated using the following Monte-Carlo sampling method:
It samples $\theta_{s}$ possible worlds $g$ of $G$ independently at random and computes the mean of $|R_{g}(S)|$ over the sampled possible worlds, where $R_g(S)$ is the set of vertices reachable from $S$ in $g$. By Eq.~\eqref{Eqn: IsERgS}, the mean is an unbiased estimator of $I_{G}(S)$.
The simulation-based algorithms follow the greedy framework~\cite{Kempe2003Maximizing}.
In each iteration, the algorithms add to $S$ the vertex $v^*$ with the highest marginal gain $I_G(S \cup \{v^*\}) - I_G(S)$.
In particular, for each vertex $v$, the algorithms \textsf{GREEDY}~\cite{Kempe2003Maximizing}, \textsf{CELF}~\cite{Leskovec2007Cost} and \textsf{CELF++}~\cite{Goyal2011CELF} sample $\theta_{s}$ possible worlds $g$ of $G$. On each sample $g$, they conduct a breadth-first search to find $R_{g}(\{v\})$ and $R_{s}(S)$. The marginal gain $I_G(S \cup \{v\}) - I_G(S)$ is then estimated by the mean of $|R_{g}(\{v\}) - R_{g}(S)|$ over all samples $g$.
In order to improve time efficiency, the algorithms \textsf{StaticGreedy}~\cite{Cheng2013StaticGreedy} and \textsf{PrunedMC}~\cite{Ohsaka2014Fast} sample and store $\theta_{s}$ possible worlds in advance and estimate the marginal gain for each vertex $v$ on the set of pre-sampled possible worlds.

We report our evaluation results as follows.

\begin{itemize}
\item{\textit{\underline{Time Efficiency:}}}
Let $\zeta_{s}$ denote the expected time to find $R_g(\{v\})$, the vertices reachable from vertex $v$. The expected and the worst-case time complexities of all the simulation-based algorithms are $O(kn\theta_{s}\zeta_{s})$ and $O(knm\theta_{s})$, respectively. As discussed in~\cite{Tang2014Influence}, $\theta_{s}$ is often very large, so the simulation-based algorithms are costly and not scalable to large graphs as reported in~\cite{Arora2017Debunking}.

\item{\textit{\underline{Result Quality:}}}
All the simulation-based algorithms attain an approximate ratio of $1 - 1/e - \epsilon$, where $e$ is the base of natural logarithms, and $\epsilon \in (0, 1)$ is determined by $\theta_{s}$.
In practice, the simulation-based algorithms can yield results of good quality.

\item{\textit{\underline{Memory Footprint:}}}
\textsf{GREEDY}, \textsf{CELF} and \textsf{CELF++} require at most $O(m)$ space to store each sampled possible world, so their memory footprints are not high.
However, \textsf{StaticGreedy} and \textsf{PrunedMC} require $O(m\theta_{s})$ space to store all sampled possible worlds, so they have high memory overheads.

\item{\textit{\underline{Robustness:}}}
Let $d^{O}_{\G}(v) = \mathbb{E}[d^{O}_{g}(v)]$ be the expected out-degree of vertex $v$ across all possible worlds of $G$, and let $d_{O}$ be the mean of $d^{O}_{\G}(v)$ over all vertices $v$ in $G$.
We have $\zeta_{s} = O(d_{O}^{D})$, where $D$ is the diameter of $G$.
Clearly, the running time of the simulation-based algorithms is exponential in $d_{O}$, which becomes even larger when large influence probabilities are assigned to edges.
In the worst case, we have $O(d_{O}^{D}) = O(m)$.
Hence, the simulation-based algorithms are not robust to various influence probability settings.
\end{itemize}

\subsection{Reverse Sampling Algorithms}
\label{Sec: RevAlg-2}

Since the \textsf{RIS} algorithm~\cite{Borgs2012Maximizing}, the IM research community has witnessed a boom of reverse sampling algorithms. This category of algorithms use a new method to estimate the influence $I_G(S)$, which is based on a key concept called \emph{random reverse reachable set} (\emph{random RR set} for short).
Let $v$ be a vertex picked from $G$ uniformly at random and $g$ be a possible world sampled from $\G$ with probability $\Pr(g)$.
The set of vertices that can reach $v$ on $g$, denoted by $RR_{g}(v)$, forms a random RR set.
Borgs et al.~\cite{Borgs2012Maximizing} prove that the probability that the seed set $S$ overlaps with a random RR set is proportional to $I_{G}(S)$.
Therefore, a reverse sampling algorithm first samples $\theta_{r}$ random RR sets.
Then, in each iteration of the algorithm, it greedily selects the vertex that occurs in the most sampled random RR sets not overlapping with $S$ as the vertex $v^*$ with the highest marginal gain $I_{G}(S \cup \{v^*\}) - I_{G}(S)$. This is because the occurrence frequency is an unbiased estimator proportional to $I_{G}(S \cup \{v^*\}) - I_{G}(S)$~\cite{Borgs2012Maximizing}.

We report our evaluation results as follows.

\begin{itemize}
\item{\textit{\underline{Time Efficiency:}}}
The first reverse sampling algorithm \textsf{RIS}~\cite{Borgs2012Maximizing} must sample a huge number of random RR sets, so it is not efficient.
Subsequently, \textsf{TIM/TIM+}~\cite{Tang2014Influence} and \textsf{IMM}~\cite{Tang2015Influence} try to reduce the sample size $\theta_{r}$ by the bootstrap techniques.
However, their practical time efficiency is still not high~\cite{Arora2017Debunking, Nguyen2016Stop}.
Nauyen et al.~\cite{Nguyen2016Stop} claim that their \textsf{SSA} and \textsf{D-SSA} algorithms decrease $\theta_{r}$ to the lower bound\footnote{Huang et al.~\cite{Huang2017Revisiting} revisit \textsf{SSA/D-SSA}~\cite{Nguyen2016Stop} and fix some gaps.}. In order to further reduce $\theta_{r}$, \textsf{SKIS}~\cite{Nguyen2017Importance} applies sketch sampling, and \textsf{Coarsen}~\cite{Ohsaka2017Coarsening} uses graph reduction. As reported in~\cite{Huang2017Revisiting, Nguyen2017Importance, Nguyen2016Stop, Ohsaka2017Coarsening}, \textsf{D-SSA}, \textsf{SKIS} and \textsf{Coarsen} are several orders of magnitude faster than \textsf{RIS}, \textsf{TIM/TIM+}, \textsf{IMM}, and the simulation-based algorithms.

\item{\textit{\underline{Result Quality:}}}
Except \textsf{Coarsen}, all the reverse sampling algorithms attain an approximation ratio of $1 - 1/e - \epsilon$, where $\epsilon \in (0, 1)$ is a small value determined by $\theta_{r}$.
The approximation ratio of \textsf{Coarsen} is less than $1 - 1/e - \epsilon$ by a factor of $\gamma$, where $\gamma$ is determined by the input graph $G$.
As shown in~\cite{Huang2017Revisiting, Nguyen2017Importance, Nguyen2016Stop, Ohsaka2017Coarsening}, all these algorithms can produce results of good quality.

\item{\textit{\underline{Memory Footprint:}}}
All of \textsf{RIS}, \textsf{TIM/TIM+}, \textsf{IMM}, and \textsf{SSA/D-SSA} require a large amount of main memory because they must keep all the sampled random RR sets in the main memory for seed selection. Although \textsf{SKIS} and \textsf{Coarsen} mitigate this problem, their memory footprints are still high.
As reported in~\cite{Nguyen2017Importance}, \textsf{SKIS} requires nearly 100GB of main memory to process the largest dataset \textit{Friendster} used in the experiments.
\item{\textit{\underline{Robustness:}}}
Let $d^{I}_{\G}(v) = \mathbb{E}[d^{I}_{g}(v)]$ be the expected in-degree of vertex $v$ in a randomly selected possible world $g$ and $d_{I}$ be the average of $d^{I}_{\G}(v)$ over all vertices of $G$.
It takes $O(d_{I}^{D})$ time in expectation to obtain a random RR set.
Therefore, the time complexity of the reverse sampling algorithms can be reformulated as $O(\theta_{s}d_{I}^{D})$.
Obviously, the time cost of a reverse sampling algorithm grows exponentially with $d_{I}$.
In the worst case, we have $d_{I}^{D} = O(m)$. Therefore, the reverse sampling algorithms are also very sensitive to influence probabilities.
Fig.~\ref{Fig: EXP: RUN} illustrates the execution time of the three fastest reverse sampling algorithms \textsf{D-SSA}, \textsf{SKIS} and \textsf{Coarsen} to find 100 best seed vertices. The influence probabilities on all edges are set to be a constant $p_{u} \in (0, 1)$.
We can see that the execution time of all the algorithms grows drastically with $p_{u}$.
\end{itemize}

\subsection{Score Estimation Algorithms}
\label{Sec: RevAlg-3}

The score estimation algorithms use heuristic score functions to estimate the influence of each vertex.
At the very beginning, the algorithms compute the estimated influence scores of all vertices.
In each iteration of the algorithm, the vertex with the highest score is added to the seed set $S$ and is then removed from $G$.
Then, the scores of the remaining vertices are updated.
Hence, the performance of a score estimation algorithm significantly relies on the score estimation and the score updating functions. In the \textsf{IRIE} algorithm~\cite{Jung2013IRIE}, given $\theta_i \in (0, 1)$ and a vertex $v \in V(G)$, let $n_{i}(v)$ be the number of vertices $u$ such that there is a path from $u$ to $v$ that exists with probability no less than $\theta_i$. Naturally, $n_i(v)$ is large if $v$ has a high influence, so the estimated score of $v$ is proportional to $n_i(v)$.
The \textsf{EasyIM} algorithm~\cite{Galhotra2016Holistic} estimates the influence of $v$ by combining the probabilities of all simple paths that out-bound from $v$ and have length no greater than $L$. Both \textsf{IRIE} and \textsf{EasyIM} compute initial scores in a way similar to PageRank~\cite{Page1998The}.
Unfortunately, in each iteration of these algorithms, they recompute the estimated scores of all remaining vertices from scratch.

We report our evaluation results as follows.

\begin{itemize}
\item{\textit{\underline{Time Efficiency:}}}
Let $n_{i}$ be the expected number of $n_i(v)$ across all vertices.
The expected and the worst-case time complexities of \textsf{IRIE} are $O(k(kn_i + m))$ and $O(k(kn + m))$, respectively.
The expected and the worst-case time complexities of \textsf{EasyIM} are both $O(kLm)$.
Although the time complexities seem to be low, the algorithms actually run much slower than the reverse sampling algorithms like \textsf{IMM} and \textsf{D-SSA}~\cite{Arora2017Debunking, Wang2017Bring}.
This is because their score updating process is expensive, which must scan the input graph $G$ multiple times.

\item{\textit{\underline{Result Quality:}}}
Although there are no theoretical guarantee on the approximation ratios of the score estimation algorithms, the quality of results is generally high in practice.
As evaluated in~\cite{Arora2017Debunking, Galhotra2016Holistic}, the result quality of \textsf{EasyIM} is  almost the highest among all the tested IM algorithms.
However, the result quality of \textsf{IRIE} is not good~\cite{Arora2017Debunking}.
This is because the score estimation function used by \textsf{IRIE} only considers paths with the maximum existing probabilities but neglects many paths with useful information.

\item{\textit{\underline{Memory Footprint:}}}
\textsf{IRIE} and \textsf{EasyIM} only need to store the scores of vertices, so their space complexities are all $O(n)$, the lowest among all the IM algorithms.

\item{\textit{\underline{Robustness:}}}
Both \textsf{IRIE} and \textsf{EasyIM} are insensitive to influence probabilities.
This is because the influence score estimation process only performs arithmetic computations, and the computation time is independent of influence probability values.
In \textsf{IRIE}, although $n_i$ may become larger for large influence probabilities, it is much smaller than the number of edges,
which dominates the execution time of \textsf{IRIE}. Hence, \textsf{IRIE} also attains high robustness.
\end{itemize}

\subsection{Summary and Motivation}
\label{Sec: RevAlg-4}

We give a summary of the evaluation results in the last four columns of Table~\ref{Tab: IMAlgCmp}.
We have the following observations.

\begin{enumerate}
\item None of the existing IM algorithms can attain the four desirable properties at the same time.

\item Both the simulation-based algorithms and the reverse sampling algorithms are sensitive to influence probabilities and usually have high memory overheads. These algorithms are able to find high quality seeds.
The running time of the reverse sampling algorithms are determined by the sample size.

\item The score estimation algorithms are insensitive to influence probabilities. They have low memory footprint and can produce high quality results when good score estimation functions are used.
Their time efficiency is low because influence scores are repeatedly computed from scratch.

\end{enumerate}

\myskip
\noindent{\textbf{\underline{Motivation.}}} These findings motivate us to design a versatile IM algorithm, which should be \textbf{fast}, \textbf{accurate}, \textbf{memory-efficient}, and \textbf{robust} at the same time.
Since the sampling-based approach is inevitably sensitive to influence probabilities, the most promising way is to design a new score estimation algorithm. We have two design goals:
\begin{enumerate}
\item Design an accurate score estimation function with practically high quality.
\item Design an efficient score updating function, which needs not to access the whole graph from scratch in each iteration.
\end{enumerate}

In this paper, we propose a new IM algorithm, called \textsf{QuickIM}.
The properties of this algorithm are highlighted in the last row of Table~\ref{Tab: IMAlgCmp}.
To the best of our knowledge, \textsf{QuickIM} is the first IM algorithm that attains the four properties at the same time.
In Sections~\ref{Sec: InfEst} and~\ref{Sec: ScrUpt}, we describe how \textsf{QuickIM} estimates and updates influence scores, respectively.

\section{Influence Score Estimation}
\label{Sec: InfEst}

In this section, we introduce our influence score function to estimate the influence of a vertex.
As shown in~\cite{Arora2017Debunking, Galhotra2016Holistic}, the aggregated probabilities of all simple paths starting from a vertex $v$ (i.e., path probabilities) is a good estimate of $v$'s influence.
However, it is difficult to update this score efficiently.
In this paper, we use the aggregated probabilities of all \emph{walks} starting from $v$ (i.e., walk probabilities) to estimate $v$'s influence.
Our new score function is not only easy to compute and update but also can produce high quality estimation.
Section~\ref{Sec: InfEst-1} introduces the walk probability concept, and Section~\ref{Sec: InfEst-2} proposes the influence score computation method.

\subsection{Walk Probability}
\label{Sec: InfEst-1}

In this subsection, we formulate the concept of walk probability, which is the basis of our new influence score function.

Let $W = (v_0, v_1, \dots, v_t)$ be a sequence of vertices of the graph $G$. $W$ is a \emph{walk} if $(v_i, v_{i+1})$ is an edge of $G$ for all $0 \leq i \leq t-1$. The length of $W$ is $t$.
For brevity, we use $v \in W$ to denote that $W$ goes through vertex $v$, and use $(u, v) \in W$ to denote that $W$ goes through edge $(u, v)$.
Notably, a walk may go through an edge multiple times.
Hence, we use $\alpha_{W}(u, v)$ to represent how many times walk $W$ goes through edge $(u, v)$.

\myskip
\noindent{\textbf{\underline{Probability Space of Walk Probability.}}}
In the probability theory, a probability function is defined in a probability space. Therefore, we first define the probability space of walk probability.
This probability space is the foundation for correctly formulating and computing walk probabilities.

Without loss of generality, we only consider walks of length at most $L$.\footnote{We set $L = +\infty$ if all walks need to be considered.}
Therefore, an edge can be traversed by a walk at most $L$ times.
To explicitly represent multiple occurrences of an edge, we construct a multi-graph $G^{L}$ based on $G$. In a multi-graph, there may exist multiple edges from a vertex to another one. In particular, we have $V(G^{L}) = V(G)$; For each edge $(u, v) \in E(G)$, we have $L$ distinct edges from $u$ to $v$ in $G^{L}$, which are denoted by ${(u, v)}^{1}, {(u, v)}^{2}, \dots, {(u, v)}^{L}$, respectively.
For any edge $e \in E(G)$, the probabilities of the edges $e^{1}, e^2, \dots, e^L$ in $G^L$ are defined as follows:

\begin{equation}
\et
\label{Eqn: PGLei}
P_{G^{L}}(e^{i}) =
\begin{cases}
P_{G}(e) & \text{ if } i = 1, \\
P_{G}(e) & \text{ if } i > 1 \text{ and } e^{i-1} \text{ exists}, \\
0 & \text{ if } i > 1 \text{ and } e^{i-1} \text{ does not exist}. \\
\end{cases}
\end{equation}
Clearly, the probability of $e^{i}$ conditionally depends on the existence of $e^{i-1}$ for all $2 \le i \le L$.

According to the possible world model~\cite{Li2012Mining, Zhu2017SimRank}, a possible world $g^{L}$ of the multi-graph $G^{L}$ is also a multi-graph.
Let $\alpha_{g^{L}}(u, v)$ be the number of edges from $u$ to $v$ in $g^{L}$.
Assume that the existence of the edges of $G^L$ are mutually independent.
The existence probability of $g^{L}$, denoted by $\Pr(g^{L})$, is given by the following lemma. Due to space limits, we put the proof of all the lemmas in Appendix~A of the full version of this paper~\cite{Zhu2018FullVersion}.

\begin{lemma}
\label{Lem: gLPr}
For any possible world $g^{L}$ of $G^{L}$, we have
\begin{equation}
\et
\Pr(g^{L})  = \!\!\!\!\!\!\!\!  \prod_{e \in E(G)| \alpha_{g^{L}}(e) > 0} \!\!\!\!  {P_{G}(e)}^{\alpha_{g^{L}}(e)} \!\!\!\!\!\!\!\!   \prod_{e \in E(G)| \alpha_{g^{L}}(e) < L} \!\!\!\!\!\!\!\!   1 - P_{G}(e).
\end{equation}
\end{lemma}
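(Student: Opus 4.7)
The plan is to work edge-by-edge in $G$, exploiting the key structural observation that the dependency chain in Eq.~\eqref{Eqn: PGLei} forces a very rigid existence pattern among the $L$ copies $e^1,\dots,e^L$ of a single edge $e\in E(G)$. Once this pattern is made explicit, the lemma reduces to a routine factorization using mutual independence across different $e$.

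First I would make the following observation: because $P_{G^L}(e^i)=0$ whenever $e^{i-1}$ fails to exist, every possible world $g^L$ with $\Pr(g^L)>0$ must satisfy the following ``prefix property'' for each $e\in E(G)$: the copies of $e$ that appear in $g^L$ are exactly $e^1,e^2,\dots,e^{\alpha_{g^L}(e)}$, while $e^{\alpha_{g^L}(e)+1},\dots,e^L$ are absent. This is immediate from Eq.~\eqref{Eqn: PGLei} by induction on $i$.

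Next I would compute, for a single edge $e\in E(G)$, the probability that $g^L$ contains exactly $k=\alpha_{g^L}(e)$ copies of $e$. Conditioning successively along the chain $e^1\to e^2\to\dots\to e^L$ and using Eq.~\eqref{Eqn: PGLei}, the probability that $e^1,\dots,e^k$ exist is $P_G(e)^k$, and then, when $k<L$, we must additionally have $e^{k+1}$ absent (which occurs with probability $1-P_G(e)$ conditional on $e^k$ existing), after which $e^{k+2},\dots,e^L$ are absent with probability $1$. When $k=L$ no additional factor is needed. Thus the per-edge probability is
\begin{equation*}
q_e(k)=\begin{cases}
1-P_G(e), & k=0,\\
P_G(e)^k\bigl(1-P_G(e)\bigr), & 0<k<L,\\
P_G(e)^L, & k=L.
\end{cases}
\end{equation*}

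Finally, since the excerpt assumes that the existence of edges of $G^L$ is mutually independent \emph{across different} $e\in E(G)$, I would factor
\begin{equation*}
\Pr(g^L)=\prod_{e\in E(G)} q_e\bigl(\alpha_{g^L}(e)\bigr),
\end{equation*}
and then regroup the factors: every $e$ with $\alpha_{g^L}(e)>0$ contributes the power $P_G(e)^{\alpha_{g^L}(e)}$, while every $e$ with $\alpha_{g^L}(e)<L$ (which includes all $e$ with $\alpha_{g^L}(e)=0$) contributes one factor of $1-P_G(e)$. This is exactly the stated product. The only real subtlety is bookkeeping around the boundary cases $k=0$ and $k=L$, which is precisely why the two products in the lemma are indexed by the complementary-looking conditions ``$\alpha_{g^L}(e)>0$'' and ``$\alpha_{g^L}(e)<L$'' rather than by a single partition; I would make sure to verify these endpoints explicitly before stating the result.
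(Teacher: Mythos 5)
Your proposal is correct and follows essentially the same route as the paper's proof: compute the per-edge existence probability ${P_{G}(e)}^{\alpha_{g^{L}}(e)}$ times $(1-P_{G}(e))$ when $\alpha_{g^{L}}(e)<L$ (and ${P_{G}(e)}^{L}$ when $\alpha_{g^{L}}(e)=L$) using the conditional chain of Eq.~\eqref{Eqn: PGLei}, then multiply across edges by independence. Your version is slightly more explicit in isolating the ``prefix property'' and the $k=0$ endpoint, which the paper leaves implicit, but there is no substantive difference.
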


Let ${\G}^{L}$ be the set of all possible worlds of $G^{L}$.
${\G}^{L}$ forms the sample space of the probability space, and the function $\Pr(\cdot)$ defined in Lemma~\ref{Lem: gLPr} is a probability function over ${\G}^{L}$.

\myskip
\noindent{\textbf{\underline{Walk Probability.}}}
We are now ready to formulate the concept of \emph{walk probability}. Let $g^{L}$ be a possible world of ${G}^{L}$.
A walk $W$ is said to be \emph{embedded} in $g^{L}$ if $\alpha_{g^{L}}(e) \geq \alpha_{W}(e)$ for all edges $e \in W$.
In other words, when we traverse on $g^{L}$ following the edge sequence of $W$, we can pick a distinct edge to go at each step of $W$.
The probability that $W$ is embedded in a possible world (the probability of $W$ for short), denoted by $\Pr(W)$, is thus given by
\begin{equation*}
\et
\Pr(W) = \sum_{g^L \in \G^L | W \text{ is embedded in } g^L} \Pr(g^L).
\end{equation*}
The following lemma gives us an easy way to compute $\Pr(W)$ in polynomial time.

\begin{lemma}
\label{Lem: WalkPr}
For a walk $W = (v_0, v_1, \dots, v_t)$ in the graph $G$,
\begin{equation}
\et
\Pr(W) = \prod_{i = 0}^{t-1} P_{G}(v_{i}, v_{i+1}) = \prod_{(u, v) \in W} {P_{G}(u, v)}^{\alpha_{W}(u, v)}.
\end{equation}
\end{lemma}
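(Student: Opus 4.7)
The plan is to reduce the claim to a per-edge calculation by exploiting the independence between the chains $e^1, e^2, \dots, e^L$ corresponding to different underlying edges $e \in E(G)$. Concretely, a walk $W$ is embedded in $g^L$ precisely when $\alpha_{g^L}(e) \geq \alpha_W(e)$ for every $e \in E(G)$ (with the convention that $\alpha_W(e) = 0$ for $e \notin W$). Since the conditional probability rule in Eq.~\eqref{Eqn: PGLei} couples $e^i$ only to $e^{i-1}$ for the same underlying edge $e$, the random variables $\{\alpha_{g^L}(e)\}_{e \in E(G)}$ are mutually independent. Hence
\begin{equation*}
\Pr(W) \;=\; \prod_{e \in E(G)} \Pr\!\bigl[\alpha_{g^{L}}(e) \geq \alpha_{W}(e)\bigr],
\end{equation*}
and it suffices to compute each factor.

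Next I would establish the single-edge identity $\Pr[\alpha_{g^{L}}(e) \geq k] = P_G(e)^k$ for $0 \leq k \leq L$. The cleanest route is to read off the conditional structure of Eq.~\eqref{Eqn: PGLei} directly: $e^k$ exists only if $e^{k-1}$ exists, and given $e^{k-1}$ exists, $e^k$ exists independently with probability $P_G(e)$. Since $\alpha_{g^L}(e) \geq k$ is equivalent to $e^k$ existing, telescoping the conditional probabilities gives $P_G(e)^k$. As a cross-check one can also derive this combinatorially from Lemma~\ref{Lem: gLPr} by summing $P_G(e)^{\alpha}(1 - P_G(e))$ over $\alpha = k, \dots, L-1$ plus the boundary term $P_G(e)^L$, which telescopes to $P_G(e)^k$.

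Combining the two steps yields
\begin{equation*}
\Pr(W) \;=\; \prod_{e \in E(G)} P_G(e)^{\alpha_W(e)} \;=\; \prod_{(u,v)\in W} P_G(u,v)^{\alpha_W(u,v)},
\end{equation*}
since factors with $\alpha_W(e) = 0$ are trivial. The second equality stated in the lemma, $\prod_{(u,v)\in W} P_G(u,v)^{\alpha_W(u,v)} = \prod_{i=0}^{t-1} P_G(v_i, v_{i+1})$, is a matter of regrouping: the right-hand product visits edge $(u,v)$ exactly $\alpha_W(u,v)$ times across the index $i$, so the two products coincide.

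The only subtle point, and the step that requires a touch of care, is justifying the independence across different $e \in E(G)$ together with the telescoping sum for a fixed $e$; everything else is bookkeeping. Once those are in place, the result falls out in one line.
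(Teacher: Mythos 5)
Your proposal is correct and follows essentially the same route as the paper's own proof: both reduce the embedding event to the per-edge requirement $\alpha_{g^L}(e) \geq \alpha_W(e)$, use the chain structure of Eq.~\eqref{Eqn: PGLei} to get $P_G(e)^{\alpha_W(e)}$ for each underlying edge, and multiply across edges by independence. Your write-up is somewhat more careful (the explicit telescoping identity $\Pr[\alpha_{g^L}(e)\geq k]=P_G(e)^k$ and the cross-check against Lemma~\ref{Lem: gLPr}), but the argument is the same.
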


Next, we extend the above definition and formulate the probability of multiple walks.
Given a set of walks $W_1, W_2, \dots, W_t$, let $\Pr(\bigwedge_{j=1}^{t}W_j)$ represent the probability that all these walks are embedded in a randomly selected possible world of $G^{L}$, that is,
\begin{equation*}
\et
\Pr(\bigwedge_{j=1}^{t}W_j) = \sum_{g^L \in \G^L | W_1, W_2, \dots, W_t \text{ are embedded in } g^L} \Pr(g^L).
\end{equation*}
The following lemma gives us a polynomial-time method to compute $\Pr(\bigwedge_{j=1}^{t}W_j)$.

\begin{lemma}
\label{Lem: MWalkPr}
For several walks $W_1, W_2, \dots, W_t$ in the graph $G$,
\begin{equation}
\et
\Pr(\bigwedge_{j=1}^{t}W_j) = \!\!\!\!\!\!\!\!\!\!\!\! \prod_{(u, v) \text{ is an edge in any of } W_1, W_2, \dots, W_t} \!\!\!\!\!\!\!\!\!\!\!\!\!\!\!\!\!\!\!\!\!\!\!\! {P_{G}(u, v)}^{\max_{1 \leq i \leq t}\alpha_{W_i}(u, v)}.
\end{equation}
\end{lemma}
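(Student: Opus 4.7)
The plan is to reduce the joint embedding event to a conjunction of independent per-edge events on $G$, and then to evaluate each per-edge probability using the conditional structure given in Eq.~(3).

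First, I would observe that, by the single-walk embedding definition, all $t$ walks $W_1,\dots,W_t$ are simultaneously embedded in a possible world $g^L$ if and only if, for every edge $e \in E(G)$, the inequality $\alpha_{g^L}(e) \geq \alpha_{W_j}(e)$ holds for every $j$, which is equivalent to $\alpha_{g^L}(e) \geq c_e$, where $c_e := \max_{1 \leq j \leq t} \alpha_{W_j}(e)$. For any edge $e$ appearing in none of the $W_j$, the condition $c_e = 0$ is vacuous, so only edges traversed by at least one walk actually constrain $g^L$.

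Next, I would evaluate, for each fixed edge $e \in E(G)$, the probability $\Pr[\alpha_{g^L}(e) \geq c_e]$. By Eq.~(3), the copies $e^1, e^2, \dots, e^L$ in $G^L$ satisfy $\Pr[e^i \text{ exists} \mid e^{i-1} \text{ exists}] = P_G(e)$, and $e^i$ cannot exist whenever $e^{i-1}$ does not. Hence the event $\{\alpha_{g^L}(e) \geq c_e\}$ is exactly $\{e^1, \dots, e^{c_e} \text{ all exist}\}$, whose probability is $P_G(e)^{c_e}$ by iterated conditioning. (As a sanity check, one may obtain the same value by summing Lemma~\ref{Lem: gLPr} over the admissible values $\alpha_{g^L}(e) \in \{c_e, \dots, L\}$; the resulting telescoping geometric sum collapses to $P_G(e)^{c_e}$.)

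Finally, I would invoke the standard possible-world assumption that the existences of different edges of $G$ are mutually independent, which by the construction of $G^L$ extends to independence of the multiplicities $\alpha_{g^L}(e)$ across distinct $e \in E(G)$. The joint embedding event therefore factors as $\bigcap_{e \in E(G)} \{\alpha_{g^L}(e) \geq c_e\}$, whose probability is $\prod_{e \in E(G)} P_G(e)^{c_e}$; dropping the trivial factors with $c_e = 0$ yields the formula in the lemma. The step most likely to require care is the per-edge computation, where one must be careful not to treat the copies $e^1, \dots, e^L$ as independent or to conflate $c_e = \max_j \alpha_{W_j}(e)$ with $\sum_j \alpha_{W_j}(e)$; the $\max$ arises because each walk individually, not jointly, reserves distinct copies.
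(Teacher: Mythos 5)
Your proposal is correct and follows essentially the same route as the paper's own proof: reduce joint embedding to the per-edge condition $\alpha_{g^L}(e) \geq \max_j \alpha_{W_j}(e)$, evaluate each per-edge probability as $P_G(e)^{\max_j \alpha_{W_j}(e)}$ via the chained conditional structure of the copies $e^1,\dots,e^L$, and multiply over distinct edges by independence. Your version is, if anything, slightly more explicit than the paper's about why the $\max$ (rather than the sum) is the right exponent.
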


According to Lemma~\ref{Lem: MWalkPr}, the event that a walk $W$ is embedded in a possible world $g^{L}$ is not independent of the event that a walk $W'$ is embedded in $g^{L}$ if $W$ and $W'$ go through some common edges.
%

\subsection{Influence Score Function}
\label{Sec: InfEst-2}

In this subsection, we formulate our new influence score function based on the walk probability concept.
For simplicity of notation, when there is only one vertex in the seed set, we use $I_{G}(v)$ to simply denote $I_{G}(\{v\})$, the influence of the seed set $\{v\}$, and we use $I_{G}(u, v)$ to simply denote $I_{G}(\{u\}, v)$, the influence of the seed set $\{u\}$ on vertex $v$.

Given two vertices $u$ and $v$, suppose there are $h_{uv}$ walks $W_1, W_2,\break \dots, W_{h_{uv}}$ that start from $u$, end at $v$, and are of length at most $L$. Let $\Pr(\bigvee_{j=1}^{t}W_j)$ represent the probability that at least one of the walks $W_1, W_2, \dots, W_{h_{uv}}$ is embedded in a randomly selected possible world of $G^L$.
Let $[h_{uv}]$ be a concise representation of the set $\{1, 2, \dots, h_{uv}\}$.
The following lemma states the relationship between $I_{G}(u, v)$ and the probabilities of $W_1, W_2, \dots, W_{h_{uv}}$.

\begin{lemma}
\label{Lem: IGPrW}
\begin{equation}\label{eqn:IGPrW}
\et
\begin{split}
~ & I_{G}(u, v) = \Pr(\bigvee_{i = 1}^{h_{uv}} W_i) = \sum_{i=1}^{h_{uv}} \Pr(W_{i}) - \!\!\!\! \!\!\!\!  \sum_{1 \leq i < j \leq h_{uv}} \!\!\!\! \!\!\!\! \Pr(\bigwedge_{i =1}^{h_{uv}} W_i) + \dots \\
& + (-1)^{t-1} \!\!\!\!\!\!\!\!  \sum_{C \subseteq [h_{uv}], |C| = t} \!\!\!\! \!\!\!\!  \Pr(\bigwedge_{i \in C} W_i) + \dots +
(-1)^{h_{uv}-1} \Pr(\bigwedge_{i = 1}^{h_{uv}} W_i).
\end{split}
\end{equation}
\end{lemma}

Despite the complexity of Eq.~\eqref{eqn:IGPrW}, the first term $\sum_{i=1}^{h_{uv}} \Pr(W_{i})$ can be used as a good estimate of $I_{G}(u, v)$ in practice. For simplicity of notation, let $W_{G}(u, v) = \sum_{i=1}^{h_{uv}} \Pr(W_{i})$. By Eq.~\eqref{Eqn: IsERgS}, we have $I_G(u) = \sum_{v \in V(G)} I_G(u, v)$. Thus, we can use $\widehat{I}_{G}(u) = \sum_{v \in V(G)} W_{G}(u, v)$ as an estimated score of $u$'s influence $I_G(u)$.

\myskip
\noindent{\textbf{\underline{Correlation Analysis.}}}
In score estimation algorithms, the score of a vertex is expected to be highly correlated with its true influence,
that is, a vertex with higher influence should has a higher score. With this, the algorithm is able to select good seeds with high influence and produce high quality results.
In the following, we show that our score function $\widehat{I}_{G}(\cdot)$ meets this requirement.

First, we reformulate $W_{G}(u, v)$ and $I_{G}(u, v)$. Let $X^{\langle t \rangle}_{G}(u, v)$ be the sum of probabilities $\Pr(g^{L})$ of possible worlds $g^{L}$, in which there are embedded exactly $t$ walks from $u$ to $v$ of length at most $L$.
According to the following lemma, we can rewrite $I_{G}(u, v)$ and $W_{G}(u, v)$ as functions of $X^{\langle i \rangle}_{G}(u, v)$.

\begin{lemma}
We have
\label{Lem: XRwIW}
$I_{G}(u, v) = \sum_{i=1}^{h_{uv}} X^{\langle i \rangle}_{G}(u, v)$, and $W_{G}(u, v) \break = \sum_{i=1}^{h_{uv}} i X^{\langle i \rangle}_{G}(u, v)$.
\end{lemma}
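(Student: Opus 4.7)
The plan is to prove both identities by partitioning the sample space $\G^L$ according to the random variable $N(g^L) := |\{i \in [h_{uv}] : W_i \text{ is embedded in } g^L\}|$, i.e., the number of walks from $u$ to $v$ of length at most $L$ that happen to be embedded in the possible world $g^L$. By the very definition of $X^{\langle t\rangle}_G(u,v)$ we have
\[
X^{\langle t\rangle}_G(u,v) = \Pr\bigl(N(g^L)=t\bigr) = \sum_{g^L \in \G^L : N(g^L)=t} \Pr(g^L),
\]
so the events $\{N(g^L)=0\}, \{N(g^L)=1\}, \dots, \{N(g^L)=h_{uv}\}$ are a disjoint partition of $\G^L$.

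For the first equation, I would observe that the disjunction $\bigvee_{i=1}^{h_{uv}} W_i$ (at least one walk embedded) is exactly the event $\{N(g^L) \geq 1\}$. Combined with Lemma~4 (which identifies $I_G(u,v)$ with this disjunction probability), summing over the partition immediately gives
\[
I_G(u,v) = \Pr\bigl(N(g^L) \geq 1\bigr) = \sum_{t=1}^{h_{uv}} X^{\langle t\rangle}_G(u,v).
\]

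For the second equation, the key is a double-counting / swap of summation. Using the definition $\Pr(W_i) = \sum_{g^L : W_i \text{ embedded in } g^L} \Pr(g^L)$, I would write
\[
W_G(u,v) = \sum_{i=1}^{h_{uv}} \Pr(W_i) = \sum_{g^L \in \G^L} \Pr(g^L) \cdot \bigl|\{i : W_i \text{ embedded in } g^L\}\bigr| = \sum_{g^L \in \G^L} \Pr(g^L)\, N(g^L),
\]
which is simply $\mathbb{E}[N(g^L)]$. Partitioning the sum by the value of $N(g^L)$ and applying the definition of $X^{\langle t\rangle}_G(u,v)$ yields $W_G(u,v) = \sum_{t=1}^{h_{uv}} t\, X^{\langle t\rangle}_G(u,v)$.

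There is no real obstacle here; the whole content is recognizing that $W_G(u,v)$, although defined as a sum of individual walk probabilities, is nothing more than the expected number of embedded walks, whereas $I_G(u,v)$ is the probability that this count is nonzero. The only thing to be careful about is justifying the interchange of the finite sums $\sum_i$ and $\sum_{g^L}$ (both are finite, so this is immediate) and making sure the indicator identity $\sum_{i=1}^{h_{uv}} \mathbf{1}[W_i \text{ embedded in } g^L] = N(g^L)$ is stated explicitly, since that is what glues the two sides together.
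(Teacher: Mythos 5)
Your proof is correct and follows essentially the same route as the paper's: the sets $\{g^L : N(g^L)=t\}$ are exactly the paper's partition classes $\Upsilon_t$, the first identity is the same partition of the event $\{N\ge 1\}$ via Lemma~4, and the second identity is the same double-counting argument (the paper phrases it as ``$\Pr(g^L)$ is counted $t$ times,'' you phrase it as $W_G(u,v)=\mathbb{E}[N(g^L)]$). The expectation framing is a slightly cleaner way to say the same thing, but there is no substantive difference.
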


From Lemma~\ref{Lem: XRwIW}, we have $I_{G}(u, v) \leq W_{G}(u, v) \leq h_{uv} I_{G}(u, v)$. Therefore, when $I_{G}(u, v)$ is large, $W_{G}(u, v)$ is also large. Thus,
$W_{G}(u, v)$ can distinguish vertices of high influence from those of low influence.

Furthermore, we analyze the gap between the score $\widehat{I}_{G}(u)$ of vertex $u$ and its true influence $I_G(u)$.
Let $p_{m} = \max_{e \in E(G)} P_{G}(e)$ be the maximum influence probability of all edges in $G$.
Obviously, when $i \geq 2$, there exist at least 3 distinct edges in a possible world graph that embeds $i$ walks.
Therefore, we have $X^{\langle i \rangle}_{G}(u, v) \leq {h_{uv} \choose i} p_{m}^{3}$.
Let $\epsilon_{G}(u, v) = W_{G}(u, v) - I_{G}(u, v)$.
By Lemma~\ref{Lem: XRwIW}, $\epsilon_{G}(u, v) \geq 0$ obviously holds. We also have
\begin{equation}
\et
\label{Eqn: Epsiuv}
\begin{split}
\epsilon_{G}(u, v)  \! &  = \! \sum_{i=2}^{h_{uv}} (i - 1) \!X^{\langle i \rangle}_{G}(u, v)  \! \le \!  (h_{uv} - 1) \sum_{i=2}^{h_{uv}}  \! X^{\langle i \rangle}_{G}(u, v) \\
& \leq  p_{m}^3 (h_{uv} - 1) \sum_{i=2}^{h_{uv}} {h_{uv} \choose i}  \leq p_{m}^{3} h_{uv} 2^{h_{uv}}.
\end{split}
\end{equation}
By Eq.~\eqref{Eqn: Epsiuv}, we have the following bounds on $\widehat{I}_{G}(u) - I_G(u)$.
\begin{equation}
\et
\label{Eqn: IGuBound}
0 \leq \widehat{I}_{G}(u) - I_{G}(u) \leq p_{m}^3 L \sum_{v \in V(G)} h_{uv} 2^{h_{uv}}.
\end{equation}
From Eq.~\eqref{Eqn: IGuBound}, we find that the gap between $\widehat{I}_{G}(u)$ and ${I}_{G}(u)$ is determined by $h_{uv}$.
We can easily see that this gap is highly correlated with ${I}_{G}(u)$, that is, when $I_{G}(u)$ is large, $u$ is likely to connect to many other vertices $v$ with high influence probabilities $I_{G}(u, v)$.
Since $I_{G}(u, v)$ is large, there may exist multiple paths from $u$ to $v$, so the number of walks $h_{uv}$ is also large.
Thus, the estimation $\widehat{I}_{G}(u)$ tends to enlarge ${I}_{G}(u)$ when ${I}_{G}(u)$ is large.
According to this property, we can easily identify vertices with high influence.
As verified by the experimental results in Section~\ref{Sec: PerEva}, by using this score function, our algorithm can produce as high quality results as the state-of-the-art IM algorithms.

%

\myskip
\noindent{\textbf{\underline{Revisiting Error Analysis in~\cite{Galhotra2016Holistic}.}}}
In addition to our analysis, we revise some existing results in~\cite{Galhotra2016Holistic}.
Note that if the input graph $G$ is a directed acyclic graph (DAG), every walk from a vertex $u$ to a vertex $v$ must be a simple path from $u$ to $v$. In this case, Lemma~4 and Lemma~5 in~\cite{Galhotra2016Holistic} derive two relative errors of $W_{G}(u, v)$, namely
$\epsilon_{1}^{DAG} = \sum_{w \in N_{G}^{I}(v)} (p_{(w, v)} - 1)A_{1}$ and $\epsilon_{2}^{DAG} = \sum_{p \in \mathbb{P}_{uv}} \prod_{e \in p} p_e$, where $A_1 = \sum_{p \in \mathbb{P}_{uw}} \prod_{e \in p} p_e$ and $\mathbb{P}_{uv}$ is the set of all walks (i.e., simple paths) from $u$ to $v$ on $G$.
According to Eq.~(7) in~\cite{Galhotra2016Holistic}, $\epsilon_{1}^{DAG} \ge 0$. However, since $A_1 > 0$ and $p_{(w, v)} < 1$ for some edges $(w, v) \in E(G)$,
we have $\epsilon_{1}^{DAG} < 0$, which is a contradiction. This mistake is caused by the following reason: In~\cite{Galhotra2016Holistic}, it is regarded that $\Pr(\bigwedge_{j=1}^{t}W_j) = \prod_{j=1}^{t} \Pr(W)$. However, the equation only holds when all of
$W_i, W_2, \dots, W_t$ have no common edges.

\myskip
\noindent{\textbf{\underline{Score Computation Method.}}}
The influence scores $\widehat{I}_{G}(u)$ of all vertices $u$ in the graph $G$ can be computed as follows:
Let $\A_{n \times n}$ be the adjacency matrix of $G$, where $\A[u, v] = P_{G}(u, v)$ if $(u, v) \in E(G)$ and $\A[u, v] = 0$ otherwise.
Naturally, we have $W_{G}(u, v) = \sum_{j=1}^{L} \A^{j}[u, v]$.
Let $\HH = (1, 1, \dots, 1)^T$ be a $n$-dimensional vector with all elements 1.
We have $\widehat{I}_{G}(u) = \sum_{j=1}^{L} \sum_{v \in V(G)}\A^{j}[u, v] = \sum_{i=1}^{L}(\A^{j}\HH)[u]$.
Let $\F_j = \A^{j}\HH$ and $\F = \sum_{j=1}^{L} \F_{j}$.
We have $\F[u] = \widehat{I}_{G}(u)$ for all vertices $u$ of $G$.

We present the \textsf{ScoreEst} procedure to compute $\F$.
First, we compute $\F_{1} = \A\HH$ (line~2). Then, we iteratively compute $\F_{j}$ by left multiplying $\F_{j-1}$ with $\A$ for $j = 2, 3, \dots, L$ (line~4). Finally, we output $\F_1 + \F_2 + \dots + \F_L$ as $\F$ (line~5). We store $\F_1, \F_2, \dots, \F_L,$ and $\F$ in the main memory because they will be frequently used later in the score updating process.

Every matrix multiplication in \textsf{ScoreEst} can be done in $O(m)$ time, so the time complexity of \textsf{ScoreEst} is $O(Lm)$. The space complexity of \textsf{ScoreEst} is $O(Ln)$ since it stores $\F_1, \F_2, \dots, \F_L$, and $\F$, each of which requires $O(n)$ space.

\begin{figure}[!t]
    \centering
    \scriptsize
    \resizebox{\algswidth}{!}{
    \fbox{
    \parbox{\figwidth}{
    {
    \textbf{Procedure} \textsf{ScoreEst}$(G, L)$
    \begin{algorithmic}[1]
	\STATE $\A \gets$ the adjacency matrix of the graph $G$, $\HH \gets (1, 1, \dots, 1)^T$
	\STATE $\F_{1} \gets \A\HH$
	\FOR{$j \gets2 $ to $L$}
		\STATE $\F_i \gets \A \F_{i-1}$  // $\F_i$ is stored in the main memory
	\ENDFOR
	\RETURN $\sum_{i=1}^{L} \F_i$
    \end{algorithmic}
    }
    }}}
    \vspace{-2em}
\end{figure}

\section{Incremental Score Updating}
\label{Sec: ScrUpt}

In this section, we present an efficient method to update the influence score of a vertex.
Recall that, after finding the vertex $w$ with the highest estimated score, \textsf{EasyIM}~\cite{Galhotra2016Holistic} and \textsf{IRIE}~\cite{Jung2013IRIE} remove $w$ from the graph $G$ and recomputes the
scores of all remaining vertices from scratch. Obviously, this updating strategy is time consuming.
We observe that the removal of $w$ only affects the scores of vertices in the proximity of $w$.
Therefore, we propose an \emph{incremental} method to fast update the influence score of each remaining vertex.
Section~\ref{Sec: ScrUpt-1} presents the basic updating method, and Section~\ref{Sec: ScrUpt-2} describes the more efficient lazy updating method.

\subsection{Basic Updating Method}
\label{Sec: ScrUpt-1}

Given $k$, the budge number of seeds, all the IM algorithms that follow the general greedy framework must carry out $k$ iterations.
For $1 \leq t \leq k$, let $\F^{(t)}$ and $\F^{(t)}_{i}$ represent the states of $\F$ and $\F_i$ in the $t$-th iteration, respectively.
Note that $\F^{(1)}$ and $\F^{(1)}_{i}$ are initial vectors computed and stored by the \textsf{ScoreEst} procedure.

Let $w^{(t)}$ be the vertex with the hightest estimated score selected in the $t$-th iteration.
Before we proceed to the $(t + 1)$-th iteration, we must remove $w^{(t)}$ from $G$, that is, remove all the edges incident to $w^{(t)}$.
Interestingly, it is sufficient to only remove all the out-edges of $w^{(t)}$ because keeping all the in-edges of $w^{(t)}$ in $G$ has an insignificant impact on score updating when $p_{m}$ is small, where $p_{m} = \max_{e \in E(G)} P_{G}(e)$ is the maximum influence probability of all edges in $G$. Moreover, removing the out-edges of $w^{(t)}$ makes score updating easier. This is guaranteed by the following lemma.

\begin{lemma}
\label{Lem: InfluInw}
Let $w$ be a vertex of $G$. Let $G_1$ be the graph obtained by removing all the incident edges of $w$ from $G$. Let $G_2$ be the graph obtained by removing all the out-edges of $w$ from $G$. For any vertex $u \neq w$, we have
$0 \leq |I_{G_{1}}(u) - I_{G_{2}}(u)| \leq 1$ and $|\widehat{I}_{G_1}(u)  - \widehat{I}_{G_2}(u) | \leq  p_{m}^{3} h_{uw} 2^{h_{uw}}$, where $h_{uw}$ is the number of walks from $u$ to $w$.
\end{lemma}

Let $\A^{(t)}$ be the adjacency matrix of the graph $G$ at the beginning of the $t$-th iteration and let $\M^{(t)}$ be a matrix, where $\M^{(t)}[u, v] = -P_{G}(u, v)$ if $u = w^{(t)}$, and $\M[u, v] = 0$ otherwise.
For $1 \leq i \leq L$, let
$\Delta \F_{i}^{(t)} = \F_i^{(t+1)} - \F_i^{(t)}$. Obviously, we have $\F_i^{(t+1)} = (\A^{(t)} + \M^{(t)})^i \HH$ and $\F^{(t+1)} = \F^{(t)} + \sum_{i=1}^{L} \Delta \F_{i}^{(t)}$.
Thus, to update $\F^{(t)}$ to $\F^{(t+1)}$, it is sufficient to compute all $\Delta \F_{i}^{(t)}$.

\myskip
\noindent{\textbf{\underline{Computating $\Delta \F^{(t)}_{i}$.}}}
First, we can expand the equation $\Delta \F^{(t)}_{i} = (\A^{(t)} + \M^{(t)})^i \HH - {\A^{(t)}}^i \HH$ by the binomial theorem as follows:
\begin{equation}
\label{Eqn: DeltaFi}
\Delta \F^{(t)}_{i} \!\!  = \!\!\!\!\!\!\!\!\!\!\!\!\!\! \sum_{\alpha_1, \dots, \alpha_s , \beta_1, \dots, \beta_{s-1}}
\!\!\!\!\!\!\!\!\!\!\!\!\!\! {\A^{(t)}}^{\alpha_1} {\M^{(t)}}^{\beta_1}  \dots \! {\A^{(t)}}^{\alpha_{s-1}} {\M^{(t)}}^{\beta_{s-1}}{\A^{(t)}}^{\alpha_s}\HH,
\end{equation}
\begin{equation*}
\scriptsize
\begin{split}
\text{\normalsize where} ~& 0 \leq \alpha_h \leq i ~ \text{\normalsize  for all } 1 \leq h \leq s, \\
~ & 0 \leq \beta_{h} \leq i ~ \text{\normalsize for all } 1 \leq h \leq s - 1, \\
~ &\sum_{h=1}^{s} \alpha_h + \sum_{h=1}^{s-1} \beta_h = i,~ \text{\normalsize and}~ \beta_h > 0, \text{\normalsize for some } 1 \leq h \leq s-1.
\end{split}
\end{equation*}
The terms in Eq.~\eqref{Eqn: DeltaFi} can be categorized into three disjoint groups and are dealt with in different ways:

\begin{itemize}
\item{\textit{{Group~1:}}} This group is composed by the terms with $\alpha_1 = 0$. In the results of these terms, all the elements except the $w^{(t)}$-th one are 0. However, since $w^{(t)}$ has already been selected as a seed, it is unnecessary to update its score any more. Hence, we can eliminate these terms from Eq.~\eqref{Eqn: DeltaFi} without affecting updating the scores of the vertices that have not been selected as seeds yet.

\item{\textit{{Group~2:}}} This group is composed by the terms with $\alpha_1 > 0$ and $\beta_j \geq 2$ for some $1 \leq j \leq s-1$. When $\beta_j \geq 2$, we have ${\M^{(t)}}^{\beta_j} = \0$ because there is no self-loop (i.e., edge from a vertex to itself) in $G$, where $\0 = (0, 0, \dots, 0)^T$ represents the $n$-dimensional vector whose elements are all 0. Hence, all terms in this category are equal to $\0$.

\item{\textit{{Group~3:}}} This group is composed by the terms with $\alpha_1 > 0$ and $0 \le \beta_{1}, \beta_{2}, \dots, \beta_{s - 1} \le 1$. This category can be further divided into $i-1$ disjoint groups according to $\alpha_1$. All the terms with the same $\alpha_1$ belong to the same group.
For $1 \leq j \leq i-1$, let
\begin{equation*}
\scriptsize
\label{Eqn: DeltaFi-2}
\Y_{j} \!\!=\!\! {\A^{(t)}}^j \M^{(t)} \!\!\!\!\!\!\!\!\!\!\!\!\!\!\!\!\! \sum_{\alpha_2, \dots, \alpha_s , \beta_2, \dots, \beta_{s-1}}
\!\!\!\!\!\!\!\!\!\!\!\!\!\!\!\! {\A^{(t)}}^{\alpha_2} {\M^{(t)}}^{\beta_2} \!\!\!\!\!\!\! \dots \! {\A^{(t)}}^{\alpha_{s-1}} {\M^{(t)}}^{\beta_{s-1}}{\A^{(t)}}^{\alpha_s} \HH,
\end{equation*}
\begin{equation*}
\scriptsize
\begin{split}
\text{\normalsize where} ~& 0 \leq \alpha_h \leq i ~ \text{\normalsize  for all } 2 \leq h \leq s, \\
~ & 0 \leq \beta_{h} \leq 1 ~ \text{\normalsize for all } 2 \leq h \leq s - 1, \text{\normalsize and}\\
~ &\sum_{h=2}^{s} \alpha_h + \sum_{h=2}^{s-1} \beta_h = i - j - 1.
\end{split}
\end{equation*}

By the binomial theorem, the terms in $\Y_{j}$ are handled according to the following two cases:
\begin{itemize}
\item{\textit{{Case~1:}}} The terms in $\Y_{j}$ with $\beta_{h} = 1$ for some $2 \leq h \leq s$ must be contained in the expansion of the equation
${\A^{(t)}}^j \M^{(t)} \Delta \F^{(t)}_{i-j-1} = {\A^{(t)}}^j \M^{(t)} ((\A^{(t)} + \M^{(t)})^{i-j-1} \HH - {\A^{(t)}}^{i-j-1} \HH) $.
\item{\textit{{Case~2:}}} The only term in $\Y_{j}$ with $\beta_{h} = 0$ for all $2 \leq h \leq s$ must be equal to ${\A^{(t)}}^j \M^{(t)}\F^{(t)}_{i-j-1} = {\A^{(t)}}^j \M^{(t)} {\A^{(t)}}^{i-j-1} \HH$.
\end{itemize}
Thus, we have $\Y_{j} = {\A^{(t)}}^j \M^{(t)} (\Delta \F^{(t)}_{i-j-1} + \F^{(t)}_{i-j-1}) $.
This implies that all terms in $\Y_{j}$ can be computed together.
Let $\Delta \F^{(t)}_0 = \Delta \F^{(t)}_1 = \0$ and $\F^{(t)}_0 = \HH$.
For $i \geq 2$, we can rewrite $\Delta \F^{(t)}_i$ as
\begin{equation}
\et
\label{Eqn: DeltaFiRec}
\Delta \F^{(t)}_i = \sum_{j=1}^{i-1} {\A^{(t)}}^{j} \M^{(t)} (\Delta \F^{(t)}_{i-j-1} + \F^{(t)}_{i-j-1}).
\end{equation}
\end{itemize}
Eq.~\eqref{Eqn: DeltaFiRec} implies that we can compute $\Delta \F^{(t)}_i$ in a recursive manner. In particular, if we already have $\Delta \F^{(t)}_0, \Delta \F^{(t)}_1, \dots, \Delta \F^{(t)}_{i-1}$,
we can easily compute $\M^{(t)} (\Delta \F^{(t)}_{i-j-1} + \F^{(t)}_{i-j-1})$ for each $j$. Note that only the $w^{(t)}$-th element of $\M (\Delta \F_{i-j-1} + \F_{i-j-1})$ is non-zero.
Let $c^{(t)}_{x}$ denote $\M^{(t)} (\Delta \F^{(t)}_{x} + \F^{(t)}_{x})[w^{(t)}]$ for $0 \leq x \leq i$. We have
\begin{equation}
\label{Eqn: cxt}
c^{(t)}_x = - \sum_{v \in N_{G}^{O}(w^{(t)})} P_{G}(w^{(t)}, v)(\Delta \F^{(t)}_{x}[v] + \F^{(t)}_{x}[v]).
\end{equation}
Given a matrix $X$, let $\mathbf{X}[*, u]$ denote the $u$-th column of matrix $\mathbf{X}$.
If the columns ${\A^{{(t)}}}^j[*, w^{(t)}]$ are known for all $1 \leq j \leq L$, we immediately have
\begin{equation}
\et
\label{Eqn: DFucxt}
\Delta \F_i[u] = \sum_{j=1}^{i-1} c^{(t)}_{i-j+1} {\A^{(t)}}^j[u, w^{(t)}]
\end{equation}
for all vertices $u \neq w^{(t)}$.

\myskip
\noindent{\textbf{\underline{Basic Score Updating.}}}
Based on the computation method of $\Delta \F^{(t)}_{i}$, we can easily obtain a basic method to update $\F^{(t)}$ to $\F^{(t+1)}$.
Due to space limits, here we provide a sketch of the basic updating procedure \textsf{ScoreUpd-Basic}. The details can be found in Appendix~B of the full version of this paper~\cite{Zhu2018FullVersion}.
First, we scan the vector $\F^{(t)}$ to obtain the vertex $w^{(t)}$.
The column ${\A^{^{(t)}}}^j[*, w^{(t)}]$ for each $1 \leq j \leq L$ is obtained by invoking the \textsf{WalkPro} procedure.
\textsf{WalkPro} applies a traversal by following the in-edges of $w^{({t})}$ to compute ${\A^{^{(t)}}}^j[u, w^{(t)}]$ for each vertex $u \in V(G)$ and each $1 \leq j \leq L$.
The details of \textsf{WalkPro} can also be found in Appendix~B in~\cite{Zhu2018FullVersion}.
After that, we can compute $c^{(t)}_{0}$ and $c^{(t)}_{1}$ by Eq.~\eqref{Eqn: cxt} (line~7).
Then, we iterate to compute $\Delta \F^{(t)}_{i}$ by Eq.~\eqref{Eqn: DFucxt} for each $2 \leq i \leq L$.
For each $i$, after obtaining $\Delta \F^{(t)}_{i}$ for all vertices, we compute $c^{(t)}_{i}$ by Eq.~\eqref{Eqn: cxt} and store it for following computations.
Finally, $\F^{(t+1)}[u]$ can be simply obtained by $\F^{(t)}[u] + \sum_{i=1}^{L} \Delta \F_{i}^{(t)}[u]$.

\myskip
\noindent{\textbf{\underline{Complexity Analysis.}}}
Let $\Delta$ be the average degree of vertices in $G$.
The expected and the worst-case time complexities of \textsf{WalkPro} are $O(\Delta^{L})$ and $O(Lm)$, respectively.
The expected and the worst-case time complexities of \textsf{ScoreUpd-Basic} are $O(L^{2}n + \Delta^{L})$ and $O(L^{2}n + Lm)$, respectively.
The expected and the worst-case space complexities of \textsf{ScoreUpd-Basic} is $O(\Delta^{L})$ and $O(Ln)$, respectively.
All the detailed analysis can be found in Appendix~B~\cite{Zhu2018FullVersion}.

When we apply \textsf{ScoreUpd-Basic} in all the $k$ iterations, the total expected and the total wort-case time cost are indeed $O(k(L^{2}n + \Delta^{L}))$ and $O(k(L^{2}n + Lm))$, respectively.
The expected space complexity is still $O(\Delta^{L})$ since we do not have to leave any data in the main memory after the procedure.

\subsection{Lazy Updating Method}
\label{Sec: ScrUpt-2}

In the \textsf{ScoreInc-Basic} procedure, we have the following observations when updating $\F^{(t)}$ to $\F^{(t+1)}$:

\begin{itemize}
\item{\textit{\underline{Observation~1:}}} If a vertex $v$ is impossible to have the maximum score among all the vertices remaining in $G$ in the $(t+1)$-th iteration, it is unnecessary to update $\F^{(t)}[v]$ to $\F^{(t+1)}[v]$.

\item{\textit{\underline{Observation~2:}}} The score of a vertex $v$ is monotonically non-increasing during the execution of \textsf{ScoreInc-Basic}, that is, we have $\F^{(i)}[u] \geq \F^{(j)}[u]$ if $i < j$.
\end{itemize}

These observations motivate us to delay updating the score of a vertex when it is necessary and skip updating the scores of many vertices during the score updating process.
To this end, we propose the following lazy score updating method.

Let $g^{(t)}_x = \sum_{h=0}^{x} c^{(t)}_x$. According to Eq.~\eqref{Eqn: DeltaFiRec}, we can reformulate $\F^{(t+1)}[u]$ as follows:
\begin{equation}
\et
\label{Eqn: Ft1Ft}
\F^{(t+1)}[u] = \F^{(t)}[u] + \sum_{j = 1}^{L} (g^{(t)}_{L - j - 1} \cdot {\A^{(t)}}^{j}[u, w^{(t)}] ).
\end{equation}
Clearly, we can fast update $\F^{(t)}[u]$ to $\F^{(t+1)}[u]$ if we already have $c^{(t)}_x$ for all $0 \leq x \leq L$ and the columns ${\A^{{(t)}}}^j[*, w^{(t)}]$ for all $1 \leq j \leq L$. To compute all $c^{(t)}_x$, Eq.~\eqref{Eqn: cxt} implies that we only have to compute $\F^{(t)}_{j}[v]$ and $\Delta \F^{(t)}_{j}[v]$ for $1 \leq j \leq L$ and for all vertices $v$ such that $(w^{(t)}, v) \in E(G)$.
Here, we show that $\F^{(t)}_{j}[v]$ and $\Delta \F^{(t)}_{j}[v]$ can be computed in a pay-as-you-go manner, which helps reduce the cost of computing $c^{(t)}_x$ for all $0 \leq x \leq L$.

 \begin{figure}[!t]
    \centering
    \scriptsize
    \resizebox{\algswidth}{!}{
    \fbox{
    \parbox{\figwidth}{
    \textbf{Procedure} \textsf{LazyF}$(t, j ,v)$
    \begin{algorithmic}[1]
        \IF{$j = 0$}
            \RETURN $0$
        \ENDIF
        \IF{$\F^{(t)}_{j}[v]$ is not stored}
            \STATE $\F^{(t)}_{j}[v] \gets \textsf{LazyF}(t - 1, j ,v)$ + \textsf{LazyDF}$(t - 1, j , v)$
            \STATE store $\F^{(t)}_{j}[v]$ in the main memory
            \STATE drop all elements $\F^{(t')}_{j}[v]$ where $t' < t$ from the main memory
        \ENDIF
        \RETURN $\F^{(t)}_{j}[v]$
    \end{algorithmic}
    }}}
    \label{Fig: IncInfOpt}
    \vspace{-1.5em}
\end{figure}
 \begin{figure}[!t]
    \centering
    \scriptsize
    \resizebox{\algswidth}{!}{
    \fbox{
    \parbox{\figwidth}{
    \textbf{Procedure} \textsf{LazyDF}$(t, j, v)$
    \begin{algorithmic}[1]
        \IF{$j = 0$ or $j = 1$}
            \RETURN 0
        \ENDIF
        \IF{$\Delta \F^{(t)}_{j}[v]$ is not stored}
            \STATE compute $\Delta \F^{(t)}_{j}[v]$ by  Eq.~\eqref{Eqn: DeltaFtC}
            \STATE store $\Delta \F^{(t)}_{j}[v]$ in the main memory
            \STATE drop all elements $\Delta \F^{(t')}_{j}[v]$ where $t' < t$ from the main memory
        \ENDIF
        \RETURN $\Delta \F^{(t)}_{j}[v]$
    \end{algorithmic}
    }}}
    \label{Fig: IncInfOpt}
        \vspace{-3em}
\end{figure}

\myskip
\noindent{\textbf{\underline{Lazy Computing of $\F^{(t)}_{j}[u]$ and $\Delta \F^{(t)}_{j}[u]$.}}}
We present the \textsf{LazyF} and \textsf{LazyDF} procedures to compute $\F^{(t)}_{j}[u]$ and $\Delta \F^{(t)}_{j}[u]$, respectively, when they are needed.
The \textsf{LazyF} procedure works as follows:
If $j = 0$, it returns $1$ according to the definition of $\F^{(t)}_{0}$ (line~2).
If $j > 1$, it checks if the value of $\F^{(t)}_{j}[v]$ has been recorded in the main memory. If so, the recorded value of $\F^{(t)}_{j}[v]$ is returned.
If it is not recorded, we recursively call \textsf{LazyF} to compute $\F^{(t-1)}_{j}[v]$ and call \textsf{LazyDF} to compute $\Delta \F^{(t-1)}_{j}[v]$ (line~4).
Note that, the value of $\F^{(1)}_{j}[v]$ must be stored in the main memory because it has been done in the \textsf{ScoreEst} procedure. Once the value of $\F^{(t)}_{j}[v]$ is computed, it is immediately stored in the main memory for later use (line~5). At this time, we can safely drop all elements $\F^{(t')}_{j}[v]$ for $t' < t$ from the main memory since they will never be accessed in the following computations (line~6). Finally, $\F^{(t)}_{j}[v]$ is returned as the result (line~7).

The \textsf{LazyDF} procedure works as follows:
If $j =0$ or $1$, it returns $0$ according to the definition of $\Delta \F^{(t)}_{0}$ and $\Delta \F^{(t)}_{1}$ (line~2).
If $j > 1$, it checks if the value of $\Delta \F^{(t)}_{j}[v]$ is stored in the main memory. If so, the stored value of $\F^{(t)}_{j}[v]$ is returned.
If it is not stored, \textsf{LazyDF} computes it by the following equation (line~4):
\begin{equation}
\et
\label{Eqn: DeltaFtC}
\Delta \F^{(t)}_{j}[v]  = \sum_{x=0}^{j-2} c^{(t)}_{x} {\A^{(t)}}^{j-x-1}[v, w^{(t)}] ).
\end{equation}
Assume that $c^{(t)}_x$ for all $0 \leq x \leq L$ and the columns ${\A^{{(t)}}}^j[*, w^{(t)}]$ for all $1 \leq j \leq L$ are stored in the main memory.
Similar to \textsf{LazyF}, \textsf{LazyDF} also stores $\Delta \F^{(t)}_{j}[v]$ in the main memory for later use and drops all elements $\Delta \F^{(t')}_{j}[v]$ where $t' < t$ from the main memory (lines~5--6). Finally, $\Delta \F^{(t)}_{j}[v]$ is returned as the result (line~7).

In each iteration of the greedy framework, either $\F^{(t)}_{j}[v]$ or $\Delta \F^{(t)}_{j}[v]$ is computed at most once for each vertex $v$ in $G$ and $1 \le j \le l$.

\begin{figure}[!t]
    \centering
    \scriptsize
    \resizebox{\algswidth}{!}{
    \fbox{
    \parbox{\figwidth}{
    {
    \textbf{Procedure} \textsf{ScoreUpd-Lazy}$(G, t, L, S)$
    \begin{algorithmic}[1]
    \STATE $w^{(t)} = \arg\max_{v \in V(G) - S, t_{v} = t} \F^{(t)}[v]$
    \STATE invoke the procedure \textsf{WalkPro}$(G, L, w^{(t)})$ with some minor modifications
    \STATE add $w^{(t)}$ into the seed set $S$
    \FOR{each $x \gets 0$ to $L$}
        \STATE $c^{(t)}_x \gets 0$
        \FOR{each vertex $v \in N_{G}^{O}(w^{(t)}) - S$}
            \STATE $\F^{(t)}_{x}[v] \gets \textsf{LazyF}(t, x, v)$
            \STATE $\Delta \F^{(t)}_{x}[v] \gets \textsf{LazyDF}(t, x, v)$
            \STATE $c^{(t)}_x \gets c^{(t)}_x  - P_{G}(w^{(t)}, v)(\Delta \F^{(t)}_{x}[v] + \F^{(t)}_{x}[v])$
        \ENDFOR
                \IF{$x = 0$}
            \STATE $g^{(t)}_x \gets c^{(t)}_x$
        \ELSE
            \STATE $g^{(t)}_x \gets g^{(t)}_{x - 1} + c^{(t)}_x$
        \ENDIF
    \ENDFOR
    \STATE store $c^{(t)}_x$ and $g^{(t)}_x$ in the main memory for each $0 \leq x \leq L$
    \STATE $f \gets 0$
    \FOR{each vertex $u \in V(G) - S$}
        \STATE get $t_u$ and $\F^{(t_u)}[u]$
        \IF{$\F^{(t_u)}[u] > f$}
            \STATE $y \gets t_u$
            \WHILE{$y \leq t$}
                \STATE $\F^{(y+1)}[u]  \gets \F^{(y)}[u] + \sum_{j = 1}^{L} (g^{(t)}_{L - j - 1} \cdot {\A^{(y)}}^{j}[u, w^{(y)}] )$
                \STATE $t_u \gets y + 1$
                \IF{$\F^{(y+1)}[u] \leq f$}
                    \STATE \textbf{break}
                \ENDIF
                \STATE $y \gets y + 1$
            \ENDWHILE
            \IF{$y=t$}
            \IF{$f < \F^{(t + 1)}[u]$}
                \STATE $f \gets \F^{(t + 1)}[u]$
            \ENDIF
            \ENDIF
        \ENDIF
    \ENDFOR
    \end{algorithmic}
    }
    }}}
    \label{Fig: IncInfOpt}
    \vspace{-3em}
\end{figure}

\myskip
\noindent{\textbf{\underline{Lazy Score Updating.}}}
We now describe the lazy score updating procedure \textsf{ScoreUpd-Lazy}.
To realize lazy update, each vertex $v$ in $G$ is associated with a time stamp $t_{v} \in \{1, 2, \dots, k\}$, which indicates that the score of $v$ has been updated to $F^{(t_v)}[v]$.
Before executing the $t$-th iteration, we must have $t_{v} < t$.
During the $t$-th iteration, let $f$ be a lower bound of $\F^{(t + 1)}[w^{(t+1)}]$, where $w^{(t+1)}$ is the vertex with the maximum updated score among all vertices remaining in $G$
selected in the next $(t+1)$-th iteration.
If $\F^{(t_v)}[v] \leq f$ for a vertex $v$, we need not to update the score of $v$
because $\F^{(t + 1)}[v] \leq \F^{(t_v)}[v] \le f$, and $v$ cannot be selected as a seed in the $(t+1)$-th iteration.

The \textsf{ScoreUpd-Lazy} procedure works as follows:
Let $t$ be the current iteration number.
First, we select as a seed the vertex $w^{(t)}$ with the maximum score (line~1).
Then, we obtain the column ${\A^{{(t)}}}^j[*, w^{(t)}]$ for each $1 \leq j \leq L$ by calling the \textsf{WalkPro} procedure (line~2).
Then, we add $w^{(t)}$ to the seed set $S$ (line~3).

The procedure then computes $c^{(t)}_{x}$ and $g^{(t)}_{x}$ for all $0 \leq x \leq L$ (lines~4--14).
In each loop, it first sets $c^{(t)}_{x} = 0$ (line~5).
By Eq.~\eqref{Eqn: cxt}, we need to know $\F^{(t)}_{j}[v]$ and $\Delta \F^{(t)}_{j}[v]$ for each vertex $v \in N^{O}_{G}(w^{(t)}) - S$.
Particularly, \textsf{ScoreUpd-Lazy} computes $\F^{(t)}_{j}[v]$ and $\Delta \F^{(t)}_{j}[v]$ by calling \textsf{LazyF} and \textsf{lazyDF}, respectively (lines~7--8).
The value $- P_{G}(w^{(t)}, v)(\Delta \F^{(t)}_{x}[v] + \F^{(t)}_{x}[v])$ is then added to $c^{(t)}_{x}$ (line~9).
After obtaining $c^{(t)}_{x}$, it sets $g^{(t)}_{x} = c^{(t)}_{x}$ if $x = 0$, and sets $g^{(t)}_{x} = g^{(t)}_{x-1} + c^{(t)}_{x}$ otherwise (lines~10--13).
After obtaining all $c^{(t)}_{x}$ and $g^{(t)}_{x}$, we store these values in the main memory (line~14).
At this time, we are ready update the scores of the vertices remaining in $G$.

Let $f$ represent a lower bound on $\F^{(t + 1)}[w^{(t+1)}]$. We initialize $f = 0$ at the beginning (line~15) and check each vertex $u$ remaining in $V(G) - S$ one by one (lines~16--28).
For $u$, we retrieve $t_u$ and $\F^{(t_u)}[u]$ (line~17).
If $\F^{(t_u)}[t_u] \le f$, $u$ must not have the maximum score, so we need not to update its score.
If $\F^{(t_u)}[t_u] > f$, we gradually update the score of $u$ (lines~20--25) as follows:

First, let $y = t_{u}$ (line~19). We iterate from $y = t_{u}$ until $y = t$.
Each time when we update $\F^{(y)}[u]$ to $\F^{(y+1)}[u]$, we add the value $\sum_{j = 1}^{L} g^{(y)}_{L - j - 1} {\A^{(y)}}^{j}[u, w^{(y)}]$ to $\F^{(y)}[u]$ by Eq.~\eqref{Eqn: Ft1Ft} (line~21) and update $t_u$ to $y+1$ (line~22). Since $y \leq t$, all the elements $g^{(y)}_x$ for $1 \leq x \leq L$ and the column ${\A^{{(y)}}}^j[*, w^{(y)}]$ for each $1 \leq j \leq L$ must have been computed and stored in the main memory in previous iterations. If $\F^{(y+1)}[u] \le f$ (line~23), we need not to update the score of $u$ any more and terminate lazy updating (line~24).
When $y = t$, the score of $u$ is updated to $\F^{(t+1)}[u]$. At this time, we update the lower bound $f$ to be $\F^{(t + 1)}[u]$ if $f < \F^{(t + 1)}[u]$ (lines~27--28).

After examining all vertices in $V(G) - S$, the score of the vertex $w^{(t+1)}$ with the maximum score in the $(t+1)$-th iteration must be updated to $\F^{(t + 1)}[w^{(t+1)}]$.

\myskip
\noindent{\textbf{\underline{Complexity Analysis.}}}
If we use \textsf{ScoreUpd-Lazy} in all $k$ iterations, we only need to compute $\F^{(t)}_j[v]$ and $\Delta \F^{(t)}_j[v]$ for all $1 \leq t \leq k$ and $1 \leq j \leq L$ for all vertices $v$ such that $(w^{(t')}, v) \in E(G)$ for some $1 \leq t' \leq k$. Meanwhile, at any iteration of the algorithm, for each $j$ and $v$, we store $\F^{(t)}_j[v]$ and $\Delta \F^{(t)}_j[v]$ for at most one $t$ where $1 \leq t \leq k$. Let $n' = |\bigcup_{t} N^{O}_{G}(w^{(t)})|$.
It takes at most $O(L)$ time to compute each $\Delta \F^{(t)}_j[v]$ by \textsf{LazyDF}, so the total time for computing all $\Delta \F^{(t)}_j[v]$ is $O(kL^{2}n')$.
It takes at most $O(k)$ time to compute a specific $\F^{(t)}_j[v]$, so the total time cost for computing all $\F^{(t)}_j[v]$ is $O(kLn')$.
The total space required to store all of them is $O(Ln')$.

Assume that all of $\F^{(t)}_j[v]$ and $\Delta \F^{(t)}_j[v]$ have been computed and stored, the time for computing all $c^{(t)}_{x}$ and $g^{(t)}_{x}$ in each iteration is $O(Ln)$.
The space to store them is $O(L)$ in each iteration.
The total time and space cost across all of the $k$ iterations are at most $O(kLn)$ and $O(kL)$, respectively.

In the main loop of vertex examination, we have to update $\F^{(t)}[u]$ for all vertices $u$ in the worst-case. The time cost of the each updating (line~21) is $O(L)$.
As a result, the total time cost for score updating in all $k$ iterations is $O(kLn)$.
Meanwhile, we need $O(n)$ space to store $t_u$ and $\F^{(t_u)}[u]$ for each vertex $u$.

The expected and the worst-case time complexities of \textsf{WalkPro} are $O(\Delta^{L})$ and $O(Lm)$ in each iteration, respectively.
Meanwhile, the expected and worst-case space cost of \textsf{WalkPro} is $O(\Delta^{L})$ and $O(Ln)$ in each iteration, respectively.
Since we need to reserve the column ${\A^{{(t)}}}^j[*, w^{(t)}]$ for each $1 \leq j \leq L$ and $1 \leq t \leq k$ in the main memory.
The expected and the worst-case total space costs of \textsf{WalkPro} are $O(k\Delta^{L})$ and $O(kLn)$, respectively.

Putting them together, the expected and the worst-case time complexities of \textsf{ScoreUpd-Lazy} across all of the $k$ iterations are also $O(kL^{2}n' + kLn + k\Delta^{L})$ and $O(kL^{2}n' + kLm)$, respectively.
The total expected and worst-case space cost are $O(k \Delta^L + Ln')$ and $O(kLn)$, respectively.
Although \textsf{ScoreUpd-Lazy} has the same time complexity as \textsf{ScoreUpd-Basic}, it practically runs much faster than \textsf{ScoreUpd-Basic} since we can skip the updating of lots of vertices in each iteration.

\section{The Q\MakeLowercase{uick}IM Algorithm}
\label{Sec: QIMAlg}

In this section, we present the \textsf{QuickIM} algorithm, which is the first IM algorithm that attains high time efficiency, high result quality, low memory footprint, and high robustness at the same time.
\textsf{QuickIM} follows the general greedy IM framework. It takes as input a graph $G$, the desired number of seeds $k$ and the longest walk length $L$. At the beginning, the seed set $S$ is set to be empty (line~1).
The influence scores of all vertices in $G$ are estimated by the \textsf{ScoreEst} procedure (line~2).
We initialize the time stamp $t_v$ as $1$ for each vertex $v \in V(G)$ (line~3).
Then, the algorithm carries out $k$ iterations.
In each iteration, it calls the \textsf{ScoreUpd-Lazy} procedure to find a seed and update the scores of other vertices (line~5). Finally, the seed set $S$ is outputted (line~6).

\begin{figure}[!t]
    \centering
    \scriptsize
    \resizebox{\algswidth}{!}{
    \fbox{
    \parbox{\figwidth}{
    {
    \textbf{Algorithm} \textsf{QuickIM}$(G, k, L)$
    \begin{algorithmic}[1]
    \STATE $S \gets \emptyset$
	\STATE \textsf{ScoreEst}$(G, L)$
    \STATE $t_{v} \gets 1$ for each vertex $v \in V(G)$
	\FOR{$t \gets 1$ to $k$}
		\STATE \textsf{ScoreUpd-Lazy}$(G, t, L, S)$
	\ENDFOR
	\RETURN $S$
    \end{algorithmic}
    }
    }}}
    \label{Fig: QuickIM}
    \vspace{-2em}
\end{figure}

\myskip
\noindent{\textbf{\underline{Evaluation Results.}}} We report our evaluation results of \textsf{QuickIM} as follows.

\begin{itemize}
\item{\textit{\underline{Time Efficiency:}}}
The expected and the worst-case time complexities of \textsf{QuickIM} are
$O(Lm + kLn + kL^{2}n' + k\Delta^{L})$ and $O(kLm + kL^{2}n')$, respectively.
Here, $L$ is often set to be a very small number in our algorithm, $\Delta$ is also small for a real social network,
and $n'$ is a small number far less than $n$.
Thus, the expected time complexity of \textsf{QuickIM} is $O(m + kn)$.
When $k$ is fixed, \textsf{QuickIM} attains the $\Omega(m+n)$ lower bound of the time complexity of an IM algorithm~\cite{Borgs2012Maximizing}.
As shown in Table~\ref{Tab: IMAlgCmp}, both the expected and the worst-case time complexity of \textsf{QuickIM} are the lowest among all the algorithms.
As verified by the experimental results in Section~\ref{Sec: PerEva}, \textsf{QuickIM} runs $1$--$3$ orders of magnitude faster than all the other IM algorithms.

\item{\textit{\underline{Result Quality:}}}
As \textsf{QuickIM} follows a similar framework of the \textsf{EasyIM} algorithm, it also produces high quality results in practice. As verified by the experimental results in Section~\ref{Sec: PerEva}, the result quality of \textsf{QuickIM} is comparable to that of \textsf{EasyIM} and other start-of-the-art IM algorithms.

\item{\textit{\underline{Memory Footprint:}}}
The expected and worst-case memory cost of \textsf{QuickIM} are $O(Ln + k\Delta^{L})$ and $O(kLn)$, respectively.
In practice, the memory overhead of \textsf{QuickIM} is very low.
This is because the parameter $k$ is often set to a small parameter in real-world IM applications, $\Delta$ is often small for real-world social networks
and it is actually sufficient to set $L$ to be a very small number.
At this time,  the memory cost of \textsf{QuickIM} attains $\Omega(n)$, which is linear w.r.t.~the number of vertices.
As verified by the experimental results in Section~\ref{Sec: PerEva}, the memory cost of \textsf{QuickIM} is comparable to \textsf{EasyIM} and $1$--$2$ orders of magnitude less than all the other IM algorithms.

\item{\textit{\underline{Influence Robustness:}}}
\textsf{QuickIM} is insensitive to influence probabilities.
This is because the score estimation and the score updating procedures only perform numerical operations on influence probabilities. Thus, the time complexity of \textsf{QuickIM} is not affected by the values of influence probabilities.
As verified in Section~\ref{Sec: PerEva}, the running time of \textsf{QuickIM} is very stable for various influence probability settings.
\end{itemize}

\section{Performance Evaluation}
\label{Sec: PerEva}

We conducted extensive experiments to evaluate the \textsf{QuickIM} algorithm. The experimental results are reported in this section.

\subsection{Setup}
\label{Sec: PerEva-1}

\noindent{\textbf{\underline{Algorithms.}}}
We implemented \textsf{QuickIM} in C++.
The implementation is available at \url{https://github.com/Akaisorani/QuickIM}.
For comparisons, we choose some state-of-the-art IM algorithms as competitors:
1) In the category of simulation-based algorithms, we choose \textsf{PrunedMC}~\cite{Ohsaka2014Fast}. This algorithm has been shown to be much faster than
\textsf{Greedy}, \textsf{CELF}, and \textsf{CELF++} in~\cite{Ohsaka2014Fast}.
It has very similar performance as \textsf{StaticGreedy}~\cite{Cheng2013StaticGreedy} in terms of time efficiency and result quality~\cite{Arora2017Debunking}.
2) Among the reverse sampling algorithms, we choose three algorithms \textsf{D-SSA}~\cite{Nguyen2016Stop}, \textsf{SKIS}~\cite{Nguyen2017Importance} and \textsf{Coarsen}~\cite{Ohsaka2017Coarsening} because they run much faster than the other reverse sampling algorithms \textsf{RIS}~\cite{Borgs2012Maximizing}, \textsf{TIM/TIM+}~\cite{Tang2014Influence} and \textsf{IMM}~\cite{Tang2015Influence} as reported in~\cite{Nguyen2017Importance, Nguyen2016Stop}.
3) In the category of score estimation algorithms, \textsf{EasyIM} is chosen.
As reported in~\cite{Arora2017Debunking}, the result quality of \textsf{IRIE} is rather low in comparison with \textsf{EasyIM}.

\begin{table}[t]
    \centering
    \scriptsize
    \vspace{-1em}
    \caption{Statistics of Networks Used in Experiments.}
    \resizebox{\columnwidth}{!}{
    \begin{tabular}{lrrr}
    	\hline
        \rowcolor{mygray}
    	Network name& \# of vertices & \# of edges & Avg. degree\\
    	\hline
    	\textsl{DBLP} & 654,628 & 3,980,318 & 6.08\\
        \textsl{YouTube} & 1,134,890 & 5,975,248 & 5.27\\
         \textsl{LiveJournal} & 4,847,571 & 68,993,773 & 14.23\\
         \textsl{Orkut} & 3,072,441 & 234,370,166 & 76.28\\
         \textsl{Twitter} & 61,578,415 & 1,468,364,884 & 23.85\\
         \textsl{Friendster} & 65,608,366 & 3,612,134,270 & 55.06\\
        \hline
    \end{tabular}
    }
    \label{Tab: Datasets}
    \vspace{-2em}
\end{table}
\begin{figure*}[t]
\centering
\includegraphics[width= \textwidth]{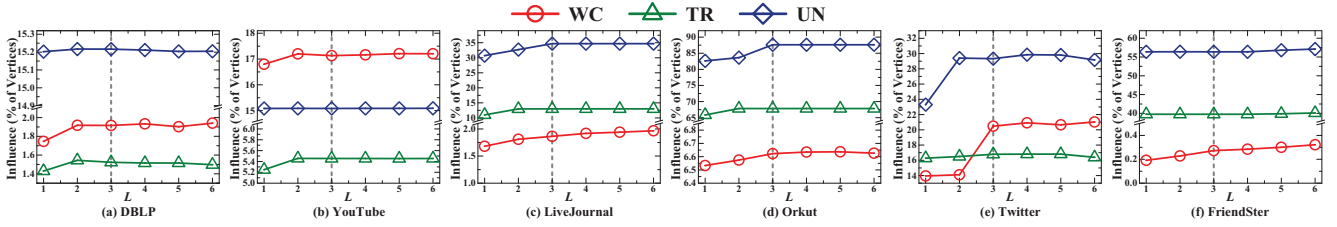}
%
\vspace{-1em}
\caption{Effects of Parameter \emph{L} on Result Quality}
\label{Fig: Exp: LT}
\vspace{-1em}
\end{figure*}

\myskip
\noindent{\textbf{\underline{Datasets.}}}
We tested all these algorithms on the datasets that have been widely used in the evaluation of IM algorithms~\cite{Nguyen2017Importance, Nguyen2016Stop, Ohsaka2017Coarsening, Tang2015Influence, Tang2014Influence}.
In particular, we use six large real-world social networks taken from the arXiv\footnote{\url{https://arxiv.org/}} and SNAP\footnote{\url{http://snap.stanford.edu/}} repositories.
Table~\ref{Tab: Datasets} summarizes the statistics of these networks. The largest networks \textsl{Twitter} and \textsl{FriendStar} contain billions of edges.

\myskip
\noindent{\textbf{\underline{Influence Probability Assignment.}}}
We assign influence probabilities $P_{G}(u, v)$ to edges $(u, v)$ according to three widely adopted models~\cite{Arora2017Debunking, Nguyen2017Importance, Nguyen2016Stop, Ohsaka2017Coarsening}:
1) In the \emph{weighted cascade (\textsl{WC}) model}, $P_{G}(u, v) = 1 /d_{G}^{I}(v)$, where $d_{G}^{I}(v)$ is the in-degree of $v$.
2) In the \emph{trivalency (\textsl{TR}) model}, $P_{G}(u, v)$ is chosen uniformly at random from three numbers $p_t, p_t^2, p_t^3 \in (0, 1)$, where $p_r = 0.1$ by default.
3) In the \emph{uniform (\textsl{UN}) model}, $P_{G}(u, v) = p_u \in (0, 1)$, where $p_u = 0.1$ by default.

\begin{figure*}[t]
\includegraphics[width= \textwidth]{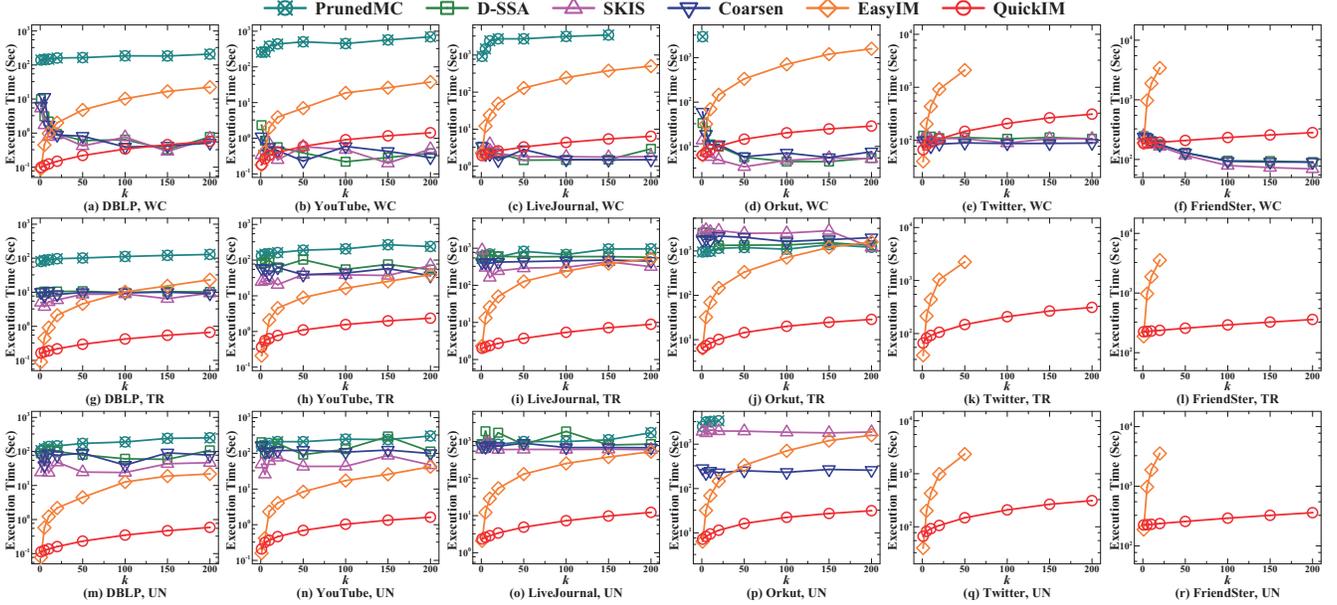}
%
%
%
\vspace{-1em}
\caption{Execution Time of IM Algorithms with respect to Number of Seeds \emph{k}.}
\label{Fig: Exp: T}
\vspace{-1.2em}
\end{figure*}
\begin{table*}
	\caption{Result Quality of IM Algorithms.}
    \vspace{0.2em}
    \resizebox{\textwidth}{!}
    {
    \begin{tabular}{c|c|rrrrrrr||rrrrrrr}
    	\hline
        \rowcolor{mygray}
        {\bf Probability } &  & & \multicolumn{5}{c}{\textbf{Influence (\% of vertices) when {\it k} = 50}} & & & \multicolumn{5}{c}{\textbf{Influence (\% of vertices) when {\it k} = 100}} & \\
        \cline{3-14}
        \rowcolor{mygray}
         {\bf Model} & {\bf Network} & {\bf \textsf{PrunedMC}}   & {\bf \textsf{D-SSA}}   & {\bf \textsf{SKIS}} & {\bf \textsf{Coarsen}} & {\bf \textsf{EasyIM}} & {\bf \textsf{QuickIM}}  & {\bf Diff}   & {\bf \textsf{PrunedMC}}   & {\bf \textsf{D-SSA}}   & {\bf \textsf{SKIS}} & {\bf \textsf{Coarsen}} & {\bf \textsf{EasyIM}} & {\bf \textsf{QuickIM}}  & {\bf Diff}  \\ \hline

        \multirow{6}{*}{\textsl{WC}} &
        \textsl{DBLP} & \textbf{1.173} & $0.980$ & $0.968$ & $1.107$ & $1.102$ & $1.128$ & {0.045} & {\bf 1.933} & $1.358$ & $1.336$ & $1.557$ & $1.903$ & $1.916$ & {0.017}\\
        & \textsl{YouTube} & {\bf 13.075} & $11.545$ & $11.931$ & $12.042$ & $13.029$ & $13.033$ & {0.042} & {\bf 17.362} & $14.844$ & $13.650$ & $15.314$ & $17.018$ & $17.132$ & {0.230}\\
        & \textsl{LiveJournal} & {\bf 1.573} & $1.333$ & $1.312$ & $1.372$ & $1.558$ & $1.436$ & {0.137} & {\bf 2.015} & $1.774$ & $1.598$ & $1.648$ & $1.788$ & $1.864$ & {0.151}\\
        & \textsl{Orkut} & N/A & $4.619$ & $4.042$ & $4.548$ & $5.263$ & {\bf 5.335}  & {0} & N/A & $4.957$ & $5.173$ & $5.055$ & $6.602$ & {\bf 6.623} & {0}\\
        & \textsl{Twitter} & N/A & $16.217$ & $15.612$ & $15.393$ & {\bf 16.232} & $16.158$  & {0.074} & N/A & $20.083$ & {\bf 20.973} & $20.174$ & N/A & $20.478$ & {0.495}\\
        & \textsl{Firendster} & N/A & {\bf 0.113} & $0.098$ & $0.092$ & N/A & $0.107$ & {0.006} & N/A & {\bf 0.298} & $0.288$ & $0.251$ & N/A & $0.273$ & {0.025} \\ \cline{1-16}

        \multirow{6}{*}{\textsl{TR}}
        & \textsl{DBLP} & {\bf 1.652} & $1.546$ & $1.556$ & $1.508$  & $1.601$ & $1.487$ & {0.165} & {\bf 1.730} & $1.584$ & $1.566$ & $1.559$& $1.691$ & $1.523$ & {0.207}\\
        & \textsl{YouTube} & {\bf 5.490} & $5.449$ & $5.468$ & $5.257$ & $5.469$ & $5.448$ & {0.042}&  5.517 & $5.469$ & $5.462$ & $5.265$ & {\bf 5.529} & $5.455$ & {0.074}\\
        & \textsl{LiveJournal} & $13.097$ & $12.993$ & $13.002$ & $12.208$ & {\bf 13.108} & $12.991$ & {0.117} & $13.147$ & $12.997$ & $13.276$ & $12.228$ & {\bf 13.219} & $12.996$ & {0.280}\\
        & \textsl{Orkut} & $67.840$ & $67.800$ & {\bf 67.841} & $60.452$ & $67.686$ & {\bf 67.841} & {0} &  $67.846$ & $67.843$ & $67.846$ & $60.448$ & $67.822$ & {\bf 67.878} & {0}\\
        & \textsl{Twitter} & N/A & N/A & N/A & N/A & {\bf 16.640} & $16.239$ & {0.401} & N/A & N/A & N/A & N/A & N/A & {\bf 16.794} & {0}\\
        & \textsl{Firendster} & N/A & N/A & N/A & N/A & N/A & {\bf 39.219} & {0} & N/A & N/A & N/A & N/A & N/A &{\bf 39.770}& {0}\\ \cline{1-16}

        \multirow{6}{*}{\textsl{UN}}
        & \textsl{DBLP} & {\bf 15.238} & $15.187$ & $15.202$ & $14.611$ & $15.042$ & $15.197$ & {0.041} & {\bf 15.361} & $15.193$ & $15.292$ & $14.639$ & $15.312$ & $15.217$ & {0.144} \\
        & \textsl{YouTube} & {\bf 15.123} & $15.003$ & $15.101$ & $14.301$ & $15.023$ & $15.086$ & {0.037} & $15.252$ & {\bf 15.488} & $15.202$ & $14.798$ & $15.211$ & $15.089$ & {0.399}\\
        & \textsl{LiveJournal} & $34.095$ & $34.108$ & $34.010$ & $30.012$ & $33.959$ & {\bf 34.702} & {0} & {\bf 34.840} & $34.697$ & $34.713$ & $30.167$ & $34.815$ & $34.703$ & {0.137}\\
        & \textsl{Orkut} & N/A & N/A & {\bf 87.630} & $72.723$ & $87.611$ & $87.628$ & {0.002} &  N/A & N/A & {\bf 87.633} & $72.733$ & $87.615$ & $87.630$ & {0.003}\\
        & \textsl{Twitter} & N/A & N/A & N/A & N/A & {\bf 29.454} & 29.101 & {0.353} &  N/A & N/A & N/A & N/A & N/A & {\bf 29.331} & {0}\\
        & \textsl{Firendster} & N/A & N/A & N/A & N/A & N/A & {\bf 55.108} &  {0} & N/A & N/A & N/A & N/A & N/A & {\bf 56.386} & {0}\\ \hline

\end{tabular}
    }
    \label{Tab: Exp-I}
    \vspace{-1.2em}
\end{table*}
\begin{figure*}[t]
\includegraphics[width= \textwidth]{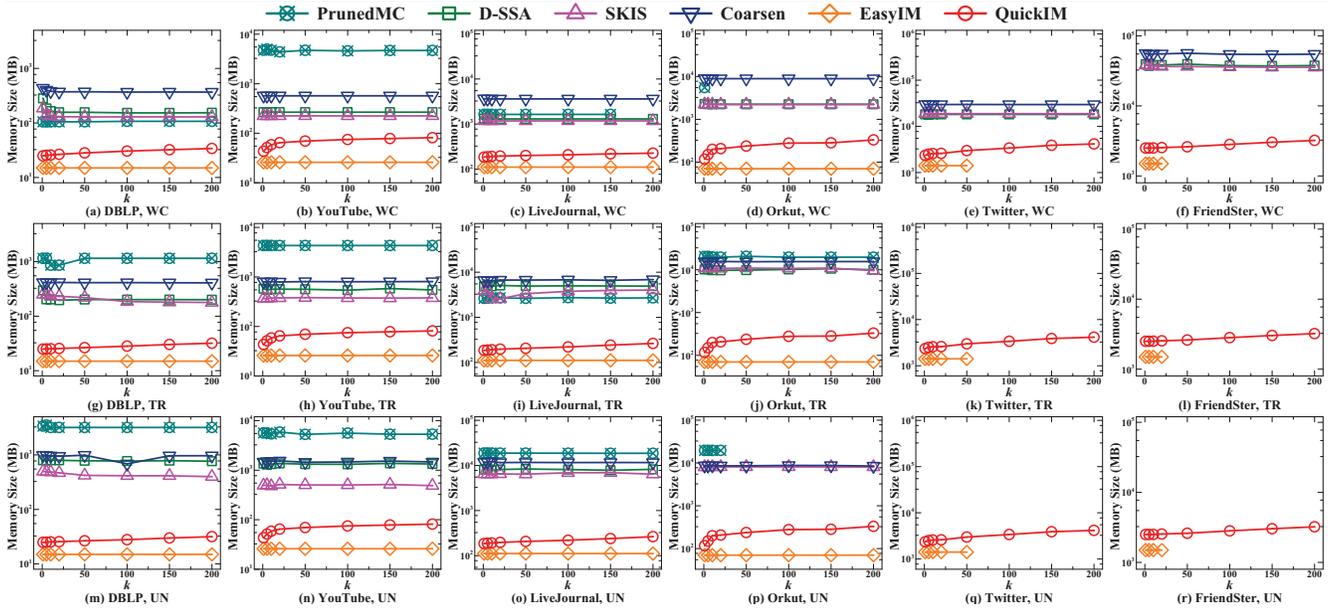}
%
%
%
\vspace{-1em}
\caption{Memory Footprint of IM Algorithms with respect to Number of Seeds \emph{k}.}
\label{Fig: EXP: M}
\vspace{-0.5em}
\end{figure*}
\begin{figure*}[t]
\includegraphics[width= \textwidth]{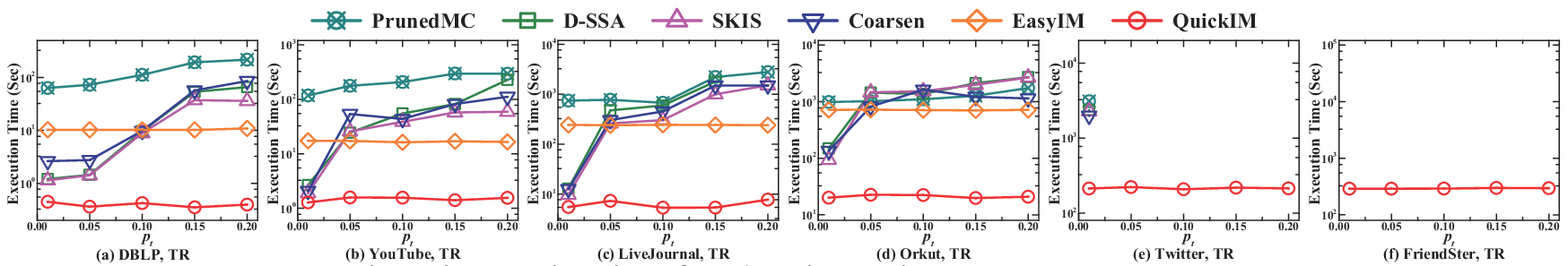}
%
\vspace{-1em}
\caption{Execution Time of IM Algorithms with respect to Parameter $p_{t}$.}
\label{Fig: EXP: RTR}
\vspace{-0.5em}
\end{figure*}
\begin{figure*}[!t]
\includegraphics[width= \textwidth]{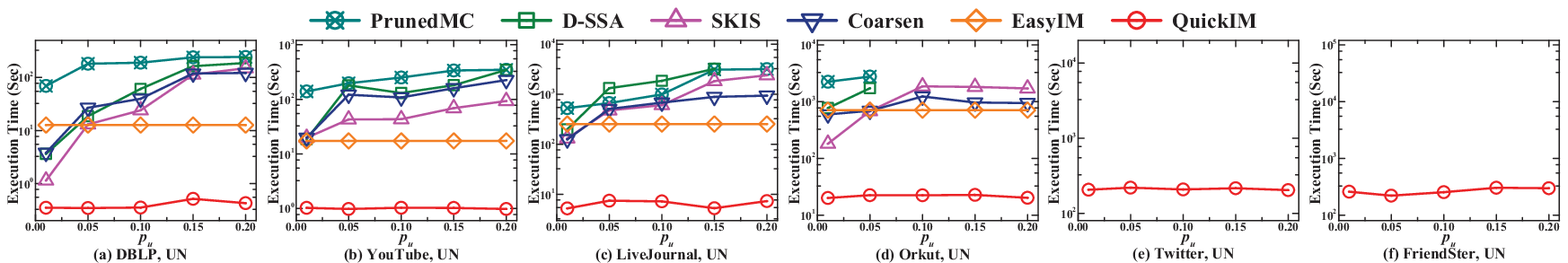}
%
\vspace{-1em}
\caption{Execution Time of IM Algorithms with respect to Parameter $p_{u}$.}
\label{Fig: EXP: RUN}
\vspace{-1.3em}
\end{figure*}
\myskip
\noindent{\textbf{\underline{Parameter Settings.}}}
For the sake of fairness, we set the parameters of the tested algorithms to their default values.
In particular, for \textsf{PrunedMC}, we set the sample size $\theta_{s}$ to 200 as recommended in~\cite{Arora2017Debunking}.
For \textsf{D-SSA}, \textsf{SKIS} and \textsf{Coarsen}, we set the error bound $\epsilon$ to 0.5.
For \textsf{SKIS}, the parameter $h$ for controlling the sample size is set to 5. For \textsf{Coarsen}, the iteration number $r$ for coarsening the input graph is set to 16.
For \textsf{EasyIM}, the maximum length $L$ of paths is set to 3.
For our \textsf{QuickIM} algorithm, we also set the maximum length $L$ of walks to 3 by default.
The reasons for such parameter settings will be elaborated in Section~\ref{Sec: PerEva-2}.

\myskip
\noindent{\textbf{\underline{Metrics.}}}
We evaluate the performance of the algorithms by three metrics:
1) \emph{Execution time}: For each algorithm, we only examine the online execution time to find $k$ seeds.
In other words, we do not account for the index construction time of \textsf{SKIS}, the graph coarsening time of \textsf{Coarsen} and the Monte-Carlo estimation time of \textsf{EasyIM}.
Note that we terminate the execution of an algorithm if it is unable to finish in an hour.
2) \emph{Memory footprint}: We examine the amount of main memory for storing all auxiliary data structures used by the algorithms.
Hence, the index size is counted in the memory footprint of \textsf{SKIS}, and the coarsened graph size is counted in the memory footprint of \textsf{Coarsen}.
3) \emph{Result quality}: The quality of a seed set is measured by the expected fraction of vertices that can be influenced by the seeds. It is \#P-hard to compute this quality measure.
Following the prior work~\cite{Arora2017Debunking, Galhotra2016Holistic, Ohsaka2017Coarsening}, we conduct $10^{4}$ Monte-Carlo simulations and compute the average fraction of vertices influenced by the seeds in the $10^4$ simulations\footnote{For \textsf{Coarsen}, the simulations are done on the coarsened graph.}.

All the experiments were performed on a machine with an Intel Xeon CPU (2.2GHz, 16 cores), 512GB of DDR4 RAM and 1.2TB of SAS disks, running CentOS 7.

\subsection{Experimental Results}
\label{Sec: PerEva-2}

\noindent{\textbf{\underline{Effects of Parameter \emph{L}.}}}
First, we examined the effects of the parameter $L$ of \textsf{QuickIM}. The goal is to determine a proper value of $L$ to ensure the result quality of \textsf{QuickIM}.
In the experiment, we set $k = 100$ and vary $L$ from $1$ to $6$.
Fig.~\ref{Fig: Exp: LT} illustrates the percentage of influenced vertices w.r.t.~$L$. For all of the influence probability assignment models \textsl{WC}, \textsl{TR} and \textsl{UN}, we all have the following observations:
1) More vertices are influenced as $L$ grows. This is because when $L$ is larger, more walks are involved in influence score estimation.
As a result, the estimated score of a vertex with higher influence generally increases more significantly than a vertex with low influence.
Therefore, \textsf{QuickIM} is more likely to select seeds with really high influence, thereby improving the result quality.
2) When $L \geq 3$, the improvement in result quality is diminishing.
The reason is that the probability of a walk decreases exponentially as $L$ grows according to Lemma~\ref{Lem: WalkPr}.
Hence, when $L$ is sufficiently large, the estimated influence scores of all vertices tend to be unchanged, so \textsf{QuickIM} tends to select the same set of seeds.

We also examined the effects of $L$ on the execution time and the memory overhead of \textsf{QuickIM}. We find that both of them grow exponentially w.r.t.~$L$.
Due to space limits, we show the detailed experimental results in Appendix~C of the full paper~\cite{Zhu2018FullVersion}.
Because of the observations above, we use $L = 3$ as the default value.

\myskip
\noindent{\textbf{\underline{Time Efficiency.}}}
In this experiment, we tested the execution time of the IM algorithms by varying $k$ from $1$ to $200$.
The results are illustrated in Fig.~\ref{Fig: Exp: T}.
We have the following observations:

1) \textsf{QuickIM} is able to find a set of good seeds in up to 4 minutes on all the networks. However, any other algorithms may fail to terminate in an hour for sufficiently large $k$ on large networks, especially on the larger networks \textsl{Twitter} and \textsl{Friendster}.

2) \textsf{QuickIM} runs 1--3 orders of magnitude faster than the state-of-the-art simulation-based algorithm \textsf{PrunedMC}. On the larger networks \textsl{Orkut}, \textsl{Twitter} and \textsl{FriendStar}, \textsf{PrunedMC} is often unable to finish in an hour.
This is simply because the time complexity of \textsf{PrunedMC} is much higher than \textsf{QuickIM}.

3) When influence probabilities are assigned according to the \textsl{WC} model, the state-of-the-art reverse sampling algorithms \textsf{D-SSA}, \textsf{SKIS} and \textsf{Coarsen} are usually 3$\times$--5$\times$ faster than \textsf{QuickIM}.
However, when the \textsl{TR} and \textsl{UN} models are used, \textsf{QuickIM} in turn runs 1--3 orders of magnitude faster than \textsf{D-SSA}, \textsf{SKIS} and \textsf{Coarsen}.
The reasons are as follows:
\begin{itemize}
\item According to the \textsl{WC} model, the expected number of in-edges incident to every vertex is $1$, so the size of each sample in the reverse sampling algorithms is often very small~\cite{Nguyen2017Importance}.
\item When the \textsl{TR} and \textsl{UN} models are used, the reverse sampling algorithms must sample more edges, thereby consuming more time. As analyzed in Section~\ref{Sec: RevAlg-2}, their time complexities grow exponentially with respect to influence probabilities.
\item The execution time of \textsf{QuickIM} is independent of influence probabilities because \textsf{QuickIM} only carries out graph traversal and arithmetic computations. Fig.~\ref{Fig: Exp: T} verifies this.
\end{itemize}

4) Both \textsf{QuickIM} and \textsf{EasyIM} are score estimation algorithms. Their execution time follow similar trends. However, \textsf{QuickIM} is 1--2 orders of magnitude faster than \textsf{EasyIM}.
This is because \textsf{EasyIM} has to scan the entire graph $L$ times whenever scores are updated, whereas \textsf{QuickIM} updates scores incrementally and only accesses a small portion of vertices.

\myskip
\noindent{\textbf{\underline{Result Quality.}}}
In this experiment, we compare \textsf{QuickIM} with the other IM algorithms in terms of result quality.
In Table~\ref{Tab: Exp-I}, we list the expected fraction of vertices influenced by the discovered seeds.
A few results are unavailable (marked by ``N/A'') because those algorithms are unable to terminate in an hour.
The column entitled ``\emph{Diff}'' records the  difference between the result quality of \textsf{QuickIM} and the best of all the algorithms (highlighted in bold).
The absolute difference is less than 0.5\% in all cases  and 0.1\% in most cases (0 means that the result of \textsf{QuickIM} is the best).
From Table~\ref{Tab: Exp-I}, we find that \textsf{QuickIM} can yield as high quality results as the best algorithms no matter which models \textsl{WC}, \textsl{TR} or \textsl{UN} are used to assign influence probabilities.
This verifies that the influence score function and the score estimation and updating methods used by \textsf{QuickIM} are effective.

\myskip
\noindent{\textbf{\underline{Memory Footprint.}}}
In this experiment, we examine the memory footprint of the IM algorithms by varying $k$ from $1$ to $200$.
The results are shown in Fig.~\ref{Fig: EXP: M}. We have the following observations:

1) \textsf{EasyIM} is the most memory-efficient. It requires less memory than \textsf{QuickIM} by a factor of 2--5. This is because the space complexity of \textsf{EasyIM} is only $O(n)$.

2) Except \textsf{EasyIM}, \textsf{QuickIM} requires 1--2 orders of magnitude less memory than all the other algorithms. For example, \textsf{QuickIM} requires less than 3GB of memory to handle the largest network \textsl{Friendster}, which contains more than 3.6 billion edges. However, as reported in~\cite{Nguyen2017Importance}, \textsf{SKIS} needs about 99GB of memory to find 100 seeds on \textsl{Friendster} in several hours under the \textsl{TR} model.
This is because the simulation-based algorithms and the reverse sampling algorithms must store a large number of samples for seed selection, which is extremely memory consuming.

3) The memory footprint of \textsf{QuickIM} grows linearly but slowly to parameter $k$. \textsf{QuickIM} stores four types of data in the memory:
First, it uses $O(Ln)$ memory during initial influence score estimation, which is independent of $k$.
Second, it uses $O(n)$ memory to store the vector $\F^{(t)}$, which is also independent of $k$.
Third, it uses at most $O(Ln')$ memory to store all $\F^{(t)}_i[v]$ and $\Delta \F^{(t)}_i[v]$, where $n' \ll n$, which is independent of $k$ too.
Finally, it expectedly uses $O(k\Delta^{L})$ memory to store some columns of matrix ${\A^{(t)}}^{j}$ in each iteration, where $\Delta$ and $L$ are small constants for real social networks.
Hence, $k$ has an insignificant effect on the memory footprint of \textsf{QuickIM}.

4) The memory overhead of \textsf{QuickIM} is independent of how influence probabilities are assigned.
However, the memory overheads of the other algorithms increase significantly from \textsl{WC} to \textsl{TR} and from \textsl{TR} to \textsl{UN}.
The reasons are as follows:
First, the space complexity of \textsf{QuickIM} is totally independent of influence probabilities.
Second, for the simulation-based algorithms and the reverse sampling algorithms, when influence probabilities become larger, each sample generally contains more vertices and edges, so more memory is used to store samples.

\myskip
\noindent{\textbf{\underline{Robustness.}}}
In this experiment, we further examine the robustness of the IM algorithms against influence probabilities.
Let $k = 100$. We vary the parameter $p_{t}$ in the \textsl{TR} model and the parameter $p_{u}$ in the \textsl{UN} model from 0.01 to 0.2.
The execution time of the IM algorithms with respect to $p_{t}$ and $p_{u}$ is illustrated in Fig.~\ref{Fig: EXP: RTR} and Fig.~\ref{Fig: EXP: RUN}, respectively.
We find that the execution time of \textsf{QuickIM} is independent of $p_{t}$ and $p_{u}$.
However, the execution time of all the sampling-based algorithms \textsf{PrunedMC}, \textsf{D-SSA}, \textsf{SKIS} and \textsf{Coarsen} grows exponentially as $p_{t}$ or $p_{u}$ gets larger.
The reasons have been clarified earlier. The execution time of \textsf{EasyIM} is also independent of $p_{t}$ and $p_{u}$. However, it runs much slower than \textsf{QuickIM}.
In addition, we examine the memory overhead of \textsf{QuickIM} with respect to influence probabilities.
The experimental results show that the memory overhead of \textsf{QuickIM} is also independent of $p_{T}$ and $p_{U}$. Due to space limits, we put these experimental results in Appendix~C of the full paper~\cite{Zhu2018FullVersion}.

\section{Conclusions}
\label{Sec: ConClu}

None of the existing IM algorithms satisfy all the desirable properties of a practically applicable IM algorithm, namely high time efficiency, good result quality, low memory footprint, and high robustness. \textsf{QuickIM} is the first versatile IM algorithm that satisfies all these properties at the same time. The superiority of \textsf{QuickIM} results from the score estimation IM paradigm, the walk-based influence score function, the efficient and accurate score estimation method, and the incremental score updating method.

\section{Acknowledgements}

This work was supported in part by the National Natural Science Foundation of China under grant No.~61672189, 61532015, and 61732003.
We thank the authors of \cite{Arora2017Debunking, Galhotra2016Holistic, Nguyen2017Importance, Nguyen2016Stop, Ohsaka2014Fast, Ohsaka2017Coarsening}
for sharing their source codes with us.


\bibliographystyle{abbrv}
\bibliography{quickim}

\begin{thebibliography}{10}

\bibitem{Arora2017Debunking}
A.~Arora, S.~Galhotra, and S.~Ranu.
\newblock Debunking the myths of influence maximization: An in-depth
  benchmarking study.
\newblock In {\em SIGMOD}, pages 651--666, 2017.

\bibitem{Aslay2014Viral}
C.~Aslay, W.~Lu, F.~Bonchi, A.~Goyal, and L.~V.~S. Lakshmanan.
\newblock Viral marketing meets social advertising: ad allocation with minimum
  regret.
\newblock {\em Proceedings of the Vldb Endowment}, 8(7):814--825, 2014.

\bibitem{Borgs2012Maximizing}
C.~Borgs, M.~Brautbar, J.~Chayes, and B.~Lucier.
\newblock Maximizing social influence in nearly optimal time.
\newblock In {\em SODA}, pages 946--957, 2014.

\bibitem{Calinescu2007Maximizing}
G.~Calinescu, C.~Chekuri, M.~P¨¢l, and J.~Vondr¨¢k.
\newblock Maximizing a monotone submodular function subject to a matroid
  constraint.
\newblock {\em Siam Journal on Computing}, 40(6):1740--1766, 2007.

\bibitem{Chaoji2012Recommendations}
V.~Chaoji, S.~Ranu, R.~Rastogi, and R.~Bhatt.
\newblock Recommendations to boost content spread in social networks.
\newblock In {\em KDD}, pages 529--538, 2012.

\bibitem{Chen2009Efficient}
W.~Chen, Y.~Wang, and S.~Yang.
\newblock Efficient influence maximization in social networks.
\newblock In {\em KDD}, pages 199--208, 2009.

\bibitem{Chen2010Scalable}
W.~Chen, Y.~Yuan, and L.~Zhang.
\newblock Scalable influence maximization in social networks under the linear
  threshold model.
\newblock In {\em ICDM}, pages 88--97, 2010.

\bibitem{Chen2011Scalable}
W.~Chen, Y.~Yuan, and L.~Zhang.
\newblock Scalable influence maximization in social networks under the linear
  threshold model.
\newblock In {\em ICDM}, pages 88--97, 2011.

\bibitem{Cheng2014IMRank}
Cheng, Suqi, Shen, Huawei, Huang, Junming, Chen, Wei, Cheng, and Xueqi.
\newblock Imrank: influence maximization via finding self-consistent ranking.
\newblock In {\em SIGIR}, pages 475--484, 2014.

\bibitem{Cheng2013StaticGreedy}
S.~Cheng, H.~Shen, J.~Huang, G.~Zhang, and X.~Cheng.
\newblock Staticgreedy: solving the scalability-accuracy dilemma in influence
  maximization.
\newblock In {\em CIKM}, pages 509--518, 2013.

\bibitem{Cohen2014Sketch}
E.~Cohen, D.~Delling, T.~Pajor, and R.~F. Werneck.
\newblock Sketch-based influence maximization and computation: Scaling up with
  guarantees.
\newblock In {\em CIKM}, pages 629--638, 2014.

\bibitem{Galhotra2016Holistic}
S.~Galhotra, A.~Arora, and S.~Roy.
\newblock Holistic influence maximization: Combining scalability and efficiency
  with opinion-aware models.
\newblock pages 743--758, 2016.

\bibitem{Goyal2011CELF}
A.~Goyal, W.~Lu, and L.~V.~S. Lakshmanan.
\newblock Celf++: optimizing the greedy algorithm for influence maximization in
  social networks.
\newblock In {\em WWW(Companion Volume)}, pages 47--48, 2011.

\bibitem{Goyal2012SIMPATH}
A.~Goyal, W.~Lu, and L.~V.~S. Lakshmanan.
\newblock Simpath: An efficient algorithm for influence maximization under the
  linear threshold model.
\newblock In {\em ICDM}, pages 211--220, 2012.

\bibitem{Huang2017Revisiting}
K.~Huang, S.~Wang, G.~Bevilacqua, X.~Xiao, L.~V.~S. Lakshmanan, K.~Huang,
  S.~Wang, G.~Bevilacqua, X.~Xiao, and L.~V.~S. Lakshmanan.
\newblock Revisiting the stop-and-stare algorithms for influence maximization.
\newblock {\em Proceedings of the Vldb Endowment}, 10(9):913--924, 2017.

\bibitem{Jung2013IRIE}
K.~Jung, W.~Heo, and W.~Chen.
\newblock Irie: Scalable and robust influence maximization in social networks.
\newblock In {\em ICDM}, pages 918--923, 2013.

\bibitem{Kempe2003Maximizing}
D.~Kempe, J.~Kleinberg, and E.~Tardos.
\newblock Maximizing the spread of influence through a social network.
\newblock pages 137--146, 2003.

\bibitem{Khan2016Towards}
A.~Khan.
\newblock Towards time-discounted influence maximization.
\newblock In {\em CIKM}, pages 1873--1876, 2016.

\bibitem{Leskovec2007Cost}
J.~Leskovec, A.~Krause, C.~Guestrin, C.~Faloutsos, N.~Glance, and N.~Glance.
\newblock Cost-effective outbreak detection in networks.
\newblock In {\em KDD}, pages 420--429, 2007.

\bibitem{Li2012Mining}
J.~Li, Z.~Zou, and H.~Gao.
\newblock Mining frequent subgraphs over uncertain graph databases under
  probabilistic semantics.
\newblock {\em Vldb Journal}, 21(6):753--777, 2012.

\bibitem{Li2018Influence}
Y.~Li, J.~Fan, Y.~Wang, and K.~L. Tan.
\newblock Influence maximization on social graphs: A survey.
\newblock {\em IEEE Transactions on Knowledge and Data Engineering}, pages
  1--20, 2018.

\bibitem{Liu2014Influence}
Q.~Liu, B.~Xiang, E.~Chen, H.~Xiong, F.~Tang, and J.~X. Yu.
\newblock Influence maximization over large-scale social networks: A bounded
  linear approach.
\newblock In {\em CIKM}, pages 171--180, 2014.

\bibitem{Nguyen2017Importance}
H.~T. Nguyen, T.~P. Nguyen, N.~H. Phan, and T.~N. Dinh.
\newblock Importance sketching of influence dynamics in billion-scale networks.
\newblock In {\em ICDM}, pages 337--346, 2017.

\bibitem{Nguyen2016Stop}
H.~T. Nguyen, M.~T. Thai, and T.~N. Dinh.
\newblock Stop-and-stare: Optimal sampling algorithms for viral marketing in
  billion-scale networks.
\newblock In {\em SIGMOD}, pages 695--710, 2016.

\bibitem{Ohsaka2014Fast}
N.~Ohsaka, T.~Akiba, Y.~Yoshida, and K.~I. Kawarabayashi.
\newblock Fast and accurate influence maximization on large networks with
  pruned monte-carlo simulations.
\newblock In {\em AAAI}, pages 138--144, 2014.

\bibitem{Ohsaka2017Coarsening}
N.~Ohsaka, S.~Fujita, S.~Fujita, and K.~I. Kawarabayashi.
\newblock Coarsening massive influence networks for scalable diffusion
  analysis.
\newblock In {\em SIGMOD}, pages 635--650, 2017.

\bibitem{Page1998The}
L.~Page.
\newblock The pagerank citation ranking : Bringing order to the web.
\newblock {\em Stanford Digital Libraries Working Paper}, 9(1):1--14, 1998.

\bibitem{Richardson2002Mining}
M.~Richardson and P.~Domingos.
\newblock Mining knowledge-sharing sites for viral marketing.
\newblock In {\em KDD}, pages 61--70, 2002.

\bibitem{Tang2015Influence}
Y.~Tang, Y.~Shi, and X.~Xiao.
\newblock Influence maximization in near-linear time:a martingale approach.
\newblock In {\em SIGMOD}, pages 1539--1554, 2015.

\bibitem{Tang2014Influence}
Y.~Tang, X.~Xiao, and Y.~Shi.
\newblock Influence maximization: near-optimal time complexity meets practical
  efficiency.
\newblock In {\em SIGMOD}, pages 75--86, 2014.

\bibitem{Tong2017An}
G.~Tong, W.~Wu, L.~Guo, D.~Li, C.~Liu, B.~Liu, and D.~Z. Du.
\newblock An efficient randomized algorithm for rumor blocking in online social
  networks.
\newblock {\em IEEE Transactions on Network Science and Engineering},
  PP(99):1--1, 2017.

\bibitem{Tong2015Adaptive}
G.~Tong, W.~Wu, S.~Tang, and D.~Z. Du.
\newblock Adaptive influence maximization in dynamic social networks.
\newblock {\em IEEE/ACM Transactions on Networking}, 25(1):112--125, 2015.

\bibitem{Wang2017Bring}
X.~Wang, Y.~Zhang, W.~Zhang, and X.~Lin.
\newblock Bring order into the samples: A novel scalable method for influence
  maximization.
\newblock {\em IEEE Transactions on Knowledge and Data Engineering},
  29(2):243--256, 2017.

\bibitem{Zhu2018FullVersion}
R.~Zhu, Z.~Zou, Y.~Han, S.~Yang, and J.~Li.
\newblock Quickim: Efficient, accurate and robust influence maximization
  algorithm on billion-scale networks.
\newblock {\em http://arxiv.org/abs/xxxx.xxxxx}, 2018.

\bibitem{Zhu2015Top}
R.~Zhu, Z.~Zou, and J.~Li.
\newblock Top-k reliability search on uncertain graphs.
\newblock In {\em ICDM}, pages 659--668, 2015.

\bibitem{Zhu2017SimRank}
R.~Zhu, Z.~Zou, and J.~Li.
\newblock Simrank on uncertain graphs.
\newblock {\em IEEE Transactions on Knowledge and Data Engineering},
  29(11):2522--2536, 2017.

\end{thebibliography}
\clearpage

\clearpage

\setcounter{lemma}{0}
\setcounter{corollary}{0}
\setcounter{theorem}{0}

\appendix

\section{Proofs}

\subsection{Proof of Lemma~1}
\setcounter{equation}{2}
\begin{lemma}
\label{Lem: gLPr}
For any possible world $g^{L}$ of $G^{L}$, we have
\begin{equation}
\Pr(g^{L})  = \!\!\!\!\!\!\!\!  \prod_{e \in E(G)| \alpha_{g^{L}}(e) > 0} \!\!\!\!  {P_{G}(e)}^{\alpha_{g^{L}}(e)} \!\!\!\!\!\!\!\!   \prod_{e \in E(G)| \alpha_{g^{L}}(e) < L} \!\!\!\!\!\!\!\!   1 - P_{G}(e).
\end{equation}
\end{lemma}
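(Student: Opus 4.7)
The plan is to decompose $\Pr(g^L)$ as a product over the original edges $e \in E(G)$, handling the $L$ copies of each edge together. The mutual independence assumption on the edges of $G^L$ lets me group copies of the \emph{same} original edge and write
\[
\Pr(g^L) = \prod_{e \in E(G)} \pi_e,
\]
where $\pi_e$ is the probability that exactly $\alpha_{g^L}(e)$ copies of $e$ appear in $g^L$. So the main work reduces to computing $\pi_e$ for a single original edge.

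To compute $\pi_e$, I would exploit the conditional structure in Eq.~\eqref{Eqn: PGLei}: because $P_{G^L}(e^i) = 0$ whenever $e^{i-1}$ is absent, the set of present copies of $e$ must form a prefix $\{e^1, e^2, \dots, e^k\}$ for some $0 \le k \le L$. Hence if $\alpha_{g^L}(e) = k$, the present copies are exactly $e^1, \dots, e^k$ and the absent copies are $e^{k+1}, \dots, e^L$. Using Eq.~\eqref{Eqn: PGLei} copy-by-copy, each of the $k$ present copies contributes a factor $P_G(e)$, and the first absent copy $e^{k+1}$ (if $k < L$) contributes a factor $1 - P_G(e)$; every subsequent absent copy contributes a factor of $1$ because its unconditional probability is $0$. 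This yields
\[
\pi_e =
\begin{cases}
P_G(e)^{\alpha_{g^L}(e)}\bigl(1 - P_G(e)\bigr), & \alpha_{g^L}(e) < L, \\[2pt]
P_G(e)^{L}, & \alpha_{g^L}(e) = L.
\end{cases}
\]

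Finally, I would rewrite $\prod_{e \in E(G)} \pi_e$ in the form stated in the lemma by splitting the contributions: the factor $P_G(e)^{\alpha_{g^L}(e)}$ is non-trivial only when $\alpha_{g^L}(e) > 0$ (matching the first product in the statement), and the factor $1 - P_G(e)$ appears precisely when $\alpha_{g^L}(e) < L$ (matching the second product). Matching these conditions term-by-term gives the claimed expression.

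The only subtle step is justifying why a configuration $g^L$ in which the present copies of some $e$ do \emph{not} form a prefix (e.g., $e^2$ present but $e^1$ absent) has probability zero and therefore is not a ``legitimate'' possible world of $G^L$. This is immediate from Eq.~\eqref{Eqn: PGLei} but should be stated explicitly, since it is what collapses the combinatorics and makes $\pi_e$ depend only on the count $\alpha_{g^L}(e)$ rather than on \emph{which} copies appear. I expect this to be the only real obstacle; everything else is routine multiplication.
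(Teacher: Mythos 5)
Your proposal is correct and follows essentially the same route as the paper's proof: group the $L$ copies of each original edge $e$ together, observe that the conditional definition in Eq.~\eqref{Eqn: PGLei} forces the present copies to form a prefix so the per-edge probability is $P_G(e)^{\alpha_{g^L}(e)}(1-P_G(e))$ when $\alpha_{g^L}(e)<L$ and $P_G(e)^L$ otherwise, and then multiply across edges by independence. Your explicit remarks that non-prefix configurations have probability zero and that absent copies beyond the first contribute a factor of $1$ are points the paper leaves implicit, but they do not change the argument.
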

\begin{proof}
For each edge $e = (u, v) \in E(G)$, there exist $\alpha_{g^{L}}(e)$ distinct edges $e^{1}, e^{2}, \dots, e^{\alpha_{g^{L}}(e)}$ from the vertex $u$ to the vertex $v$ in the possible world $g^{L}$.
If $\alpha_{g^{L}}(e) < L$, the edge $e^{\alpha_{g^{L}}(e) + 1}$ must not exist.
Therefore, if $\alpha_{g^{L}}(e) < L$, the existing probability of the $\alpha_{g^{L}}(e)$ distinct edges from $u$ to $v$ is ${P_{G}(e)}^{\alpha_{g^{L}}(e)} (1 - P_{G}(e)) $.
If $\alpha_{g^{L}}(e) = L$, the existing probability is  ${P_{G}(e)}^{\alpha_{g^{L}}(e)}$.
Since the edges in $G$ are independent, the existing probability of $g^{L}$ is
\begin{equation*}
\begin{split}
& \Pr(g^{L}) \\
& = \prod_{e \in E(G)| \alpha_{g^{L}}(e) < L}  \!\!\!\!\!\!\!\!\!\!\!\!   {P_{G}(e)}^{\alpha_{g^{L}}(e)} (1 - P_{G}(e)) \!\!\!\!\!\!\!\!   \prod_{e \in E(G)| \alpha_{g^{L}}(e) = L}  \!\!\!\!\!\!\!\!   {P_{G}(e)}^{\alpha_{g^{L}}(e)} \\
& = \prod_{e \in E(G)| \alpha_{g^{L}}(e) > 0} \!\!\!\!  {P_{G}(e)}^{\alpha_{g^{L}}(e)} \!\!\!\!\!\!\!\!   \prod_{e \in E(G)| \alpha_{g^{L}}(e) < L} \!\!\!\!\!\!\!\!   1 - P_{G}(e).
\end{split}
\end{equation*}
The lemma thus holds.
\end{proof}

\subsection{Proof of Lemma~2}
\begin{lemma}
\label{Lem: WalkPr}
For a walk $W = (v_0, v_1, \dots, v_t)$ in the graph $G$,
\begin{equation}
\Pr(W) = \prod_{i = 0}^{t-1} P_{G}(v_{i}, v_{i+1}) = \prod_{(u, v) \in W} {P_{G}(u, v)}^{\alpha_{W}(u, v)}.
\end{equation}
\end{lemma}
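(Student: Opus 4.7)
The second equality is just a bookkeeping observation: in the product $\prod_{i=0}^{t-1} P_G(v_i, v_{i+1})$, each directed edge $(u,v)$ of $G$ appears exactly $\alpha_W(u,v)$ times, so grouping equal factors immediately yields $\prod_{(u,v) \in W} P_G(u,v)^{\alpha_W(u,v)}$. So the real content is the first equality, and my plan focuses there.

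The plan is to expand the definition of $\Pr(W)$ and then factor over the edges of $G$. By definition, $\Pr(W) = \sum_{g^L \in \G^L \mid W \text{ embedded in } g^L} \Pr(g^L)$, and $W$ is embedded in $g^L$ iff $\alpha_{g^L}(e) \geq \alpha_W(e)$ for every edge $e$. Since the edges of the original graph $G$ are mutually independent, and the multiplicities $\alpha_{g^L}(e)$ for different edges $e \in E(G)$ are determined by disjoint groups of copy-edges $e^1, \dots, e^L$ in $G^L$, the event $\{W \text{ embedded}\}$ factors across $e \in E(G)$. So it suffices to show that for a single edge $e$ of $G$ and any integer $k$ with $0 \leq k \leq L$,
\begin{equation*}
\Pr\bigl[\alpha_{g^L}(e) \geq k\bigr] = P_G(e)^k,
\end{equation*}
and then multiply over $e \in W$ (edges $e \notin W$ have $\alpha_W(e)=0$ and contribute a factor of $1$).

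The main step is verifying that identity, which is where the conditional definition in Eq.~(2) is used. By Eq.~(2), $e^i$ can exist only if $e^{i-1}$ exists, so the events $\{e^i \text{ exists}\}$ are nested: $\{\alpha_{g^L}(e) \geq k\}$ is exactly the event $\{e^1, e^2, \dots, e^k \text{ all exist}\}$. Applying the chain rule together with Eq.~(2), which asserts $\Pr[e^i \text{ exists} \mid e^{i-1} \text{ exists}] = P_G(e)$ for $i \geq 2$ and $\Pr[e^1 \text{ exists}] = P_G(e)$, gives the product $P_G(e)^k$. Alternatively, one can verify the same formula by summing the marginal distribution $\Pr[\alpha_{g^L}(e) = j] = P_G(e)^j(1 - P_G(e))$ for $j < L$ and $P_G(e)^L$ for $j = L$ (which is exactly the single-edge factor in Lemma~\ref{Lem: gLPr}) from $j = k$ to $j = L$; the telescoping gives $P_G(e)^k$.

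Combining these pieces: by independence across edges of $G$,
\begin{equation*}
\Pr(W) = \prod_{e \in E(G)} \Pr\bigl[\alpha_{g^L}(e) \geq \alpha_W(e)\bigr] = \prod_{(u,v) \in W} P_G(u,v)^{\alpha_W(u,v)},
\end{equation*}
which together with the bookkeeping identity at the start completes the proof. The only subtle step is the nested/conditional structure of the copies $e^1,\dots,e^L$; everything else is routine factoring.
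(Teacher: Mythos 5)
Your proof is correct and follows essentially the same route as the paper's: factor the embedding event over the edges of $G$ by independence, and use the nested/conditional structure of the copies $e^1,\dots,e^L$ to show that the event $\{\alpha_{g^L}(e)\ge k\}$ has probability $P_G(e)^k$. Your write-up is somewhat more careful than the paper's (in particular, the explicit verification of $\Pr[\alpha_{g^L}(e)\ge k]=P_G(e)^k$ by the chain rule or by telescoping the marginal from Lemma~\ref{Lem: gLPr}), but the underlying argument is the same.
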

\begin{proof}
Let $g^L$ be a possible world of $G^L$ in which $W$ is embedded. For each edge $(u, v)$ in $W$, there must exist at least $\alpha_{W}(u, v)$ edges from the vertex $u$ to the vertex $v$ in $g^{L}$.
For $i \geq 1$, the existence of the edge $(u, v)^{i+1}$ depends on the existence of the edge $(u, v)^{i}$.
Thus, all of the edges $(u, v)^{1}, (u, v)^{2}, \dots, (u, v)^{\alpha_{W}(u, v)}$ must exist in $g^{L}$. The probability is therefore ${P_{G}(u, v)}^{\alpha_{W}(u, v)}$.
Since the edges in $G$ are independent, we have Eq.~(4). Thus, the lemma holds.
\end{proof}

\subsection{Proof of Lemma~3}
\begin{lemma}
\label{Lem: MWalkPr}
For several walks $W_1, W_2, \dots, W_t$ in the graph $G$,
\begin{equation}
\Pr(\bigwedge_{j=1}^{t}W_j) = \!\!\!\!\!\!\!\!\!\!\!\! \prod_{(u, v) \text{ is an edge in any of } W_1, W_2, \dots, W_t} \!\!\!\!\!\!\!\!\!\!\!\!\!\!\!\!\!\!\!\!\!\!\!\! {P_{G}(u, v)}^{\max_{1 \leq i \leq t}\alpha_{W_i}(u, v)}.
\end{equation}
\end{lemma}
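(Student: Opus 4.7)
The plan is to reduce the joint embedding probability to a product over edges of $G$ by using the characterization of embedding in terms of edge multiplicities, together with the independence (across distinct edges of $G$) and conditional structure (across the copies $e^1,\dots,e^L$ of a single edge) established in Equation~(2) and Lemma~1.

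First I would translate the event $\bigwedge_{j=1}^t W_j$ into a condition on edge multiplicities in $g^L$. A single walk $W_j$ is embedded in $g^L$ iff $\alpha_{g^L}(e)\geq \alpha_{W_j}(e)$ for every edge $e$ it uses, so all $t$ walks are simultaneously embedded iff, for every edge $(u,v)\in E(G)$,
\begin{equation*}
\alpha_{g^L}(u,v)\;\geq\;\beta(u,v)\,,\qquad\text{where }\beta(u,v):=\max_{1\leq j\leq t}\alpha_{W_j}(u,v).
\end{equation*}
Note that $\beta(u,v)=0$ for edges not appearing in any $W_j$, so the constraint is vacuous on those edges.

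Next I would factor $\Pr\bigl(\bigwedge_{j=1}^t W_j\bigr)$ over the edges of $G$. Because the distinct edges of $G$ are mutually independent in the possible-world construction, the joint event above splits into a product of per-edge events $\{\alpha_{g^L}(e)\geq \beta(e)\}$. Hence
\begin{equation*}
\Pr\Bigl(\bigwedge_{j=1}^t W_j\Bigr)\;=\;\prod_{e\in E(G)}\Pr\bigl[\alpha_{g^L}(e)\geq \beta(e)\bigr].
\end{equation*}
For edges with $\beta(e)=0$ the factor is $1$, so only edges appearing in some $W_j$ contribute.

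Then I would compute $\Pr[\alpha_{g^L}(e)\geq k]=P_G(e)^{k}$ for $k\geq 1$. By Equation~(2), the copy $e^{i}$ exists with probability $P_G(e)$ conditional on $e^{i-1}$ existing, and has probability $0$ otherwise; in particular, the event $\{\alpha_{g^L}(e)\geq k\}$ coincides with the event that all of $e^1,e^2,\dots,e^k$ exist. Chaining the conditional probabilities gives
\begin{equation*}
\Pr[\alpha_{g^L}(e)\geq k]=\prod_{i=1}^k \Pr[e^i\text{ exists}\mid e^{i-1}\text{ exists}]=P_G(e)^k.
\end{equation*}
Substituting $k=\beta(e)=\max_j \alpha_{W_j}(e)$ and restricting to edges that actually appear in some $W_j$ yields the lemma.

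The only subtle step is the third one: one must be careful that the conditional structure of $G^L$ (where $e^i$ depends on $e^{i-1}$) really does give $P_G(e)^k$ rather than some product of distinct probabilities, and one must justify the per-edge factorization via the independence of distinct edges of $G$. This can be done either by the telescoping conditional argument above or, alternatively, by summing the expression in Lemma~1 over all possible worlds $g^L$ consistent with $\alpha_{g^L}(e)\geq \beta(e)$ for every $e$, and verifying that the sum factorizes edgewise, with the $(1-P_G(e))$ terms telescoping out. Both routes are routine once the multiplicity characterization of embedding is in place.
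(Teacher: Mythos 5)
Your proposal is correct and follows essentially the same route as the paper's own proof: characterize the joint embedding event as $\alpha_{g^L}(e) \geq \max_j \alpha_{W_j}(e)$ for every edge, use independence across distinct edges of $G$ to factorize, and use the conditional chain $e^1, \dots, e^k$ to get the per-edge probability $P_G(e)^k$. Your write-up is somewhat more explicit than the paper's (which compresses the factorization and the telescoping of the conditional probabilities into a single sentence), but there is no substantive difference in the argument.
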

\begin{proof}
Let $g^L$ be a possible world of $G^L$ in which all of the walks $W_1, W_2, \dots, W_t$ are embedded.
For each edge $(u, v)$ in any of $W_1, W_2, \dots, W_t$,
there must exist $\max_{1 \leq  j \leq t} \alpha_{W_j}(u, v)$ edges from the vertex $u$ to the vertex $v$ in $g^{L}$.
Similar to the proof of Lemma~2, for $i \geq 1$, the existence of the edge $(u, v)^{i+1}$ depends on the existence of the edge $(u, v)^{i}$.
Thus, all of the edges $(u, v)^{1}, (u, v)^{2}, \dots, (u, v)^{\max_{1 \leq  j \leq t} (\alpha_{W_j}(u, v))}$ must exist in $g^{L}$. The probability is therefore ${P_{G}(u, v)}^{\max_{1 \leq  j \leq t} (\alpha_{W_j}(u, v))}$. Since the edges in $G$ are independent,
we have Eq.~(5). Thus, the lemma holds.
\end{proof}

\subsection{Proof of Lemma~4}
\begin{lemma}
\label{Lem: IGPrW}
\begin{equation}\label{eqn:IGPrW}
\et
\begin{split}
~ & I_{G}(u, v) = \Pr(\bigvee_{i = 1}^{h_{uv}} W_i) = \sum_{i=1}^{h_{uv}} \Pr(W_{i}) - \!\!\!\! \!\!\!\!  \sum_{1 \leq i < j \leq h_{uv}} \!\!\!\! \!\!\!\! \Pr(\bigwedge_{i =1}^{h_{uv}} W_i) + \dots \\
& + (-1)^{t-1} \!\!\!\!\!\!\!\!  \sum_{C \subseteq [h_{uv}], |C| = t} \!\!\!\! \!\!\!\!  \Pr(\bigwedge_{i \in C} W_i) + \dots +
(-1)^{h_{uv}-1} \Pr(\bigwedge_{i = 1}^{h_{uv}} W_i).
\end{split}
\end{equation}
\end{lemma}
\begin{proof}
For each possible world $g \in \G$ and a possible world $g^{L} \in {\G}^{L}$, we say $g$ is embedded in $g^{L}$ if it satisfies:
1) $\alpha_{g^{L}}(e) \geq 1$ for each edge $e \in E(g)$; and
2) $\alpha_{g^{L}}(e) = 0$ for each edge $e \in E(G) - E(g)$.
Let $\Omega(g)$ denote the set of all possible worlds $g^{L}$ in which $g$ is embedded.
Obviously, if $E(g) \neq E(g')$, we must have $\Omega(g) \cap \Omega(g') = \emptyset$.
Meanwhile, for each possible world $g$, we easily have
\begin{equation*}
\sum_{g^{L} \in \Omega(g)} \Pr(g^{L}) = \Pr(g).
\end{equation*}
Therefore, for all possible worlds $g \in \G$, $\Omega(g)$ forms a division of the set of all possible worlds ${\G}^{L}$.

Notice that, we have
\begin{equation*}
I_G(u, v) = \sum_{g \in \G| u \text{ can reach } v \text{ in } g} \Pr(g).
\end{equation*}
For each possible world $g \in \G$, if $u$ can reach $v$ on $g$, there must exist a path from $u$ to $v$ on $g$.
At this time, for any possible world $g^{L} \in \Omega(g)$, it must embed at least one walk from $u$ to $v$.
On the other hand, for a possible world $g^{L} \in \Omega(g')$, if it does not embed any walk from $u$ to $v$, $u$ cannot reach $v$ on $g'$.
Thus, we have
\begin{equation*}
\begin{split}
& I_G(u, v) = \sum_{g \in \G| u \text{ can reach } v \text{ on } g} \sum_{g^{L} \in \Omega(g)} \Pr(g^{L}) \\
& = \sum_{g^{L} \in {\G}^{L}| \text{ there exists a walk from } u \text{ to } v \text{ on } g^{L}} \Pr(g^{L}) = \Pr(\bigvee_{i = 1}^{h_{uv}} W_i).
\end{split}
\end{equation*}
By the inclusion-exclusion principle, we can further expand the probability $\Pr(\bigvee_{i = 1}^{h_{uv}} W_i)$ as Eq.~(6). Thus, the lemma holds.
\end{proof}

\begin{figure*}[t]
\centering
\includegraphics[width= \textwidth]{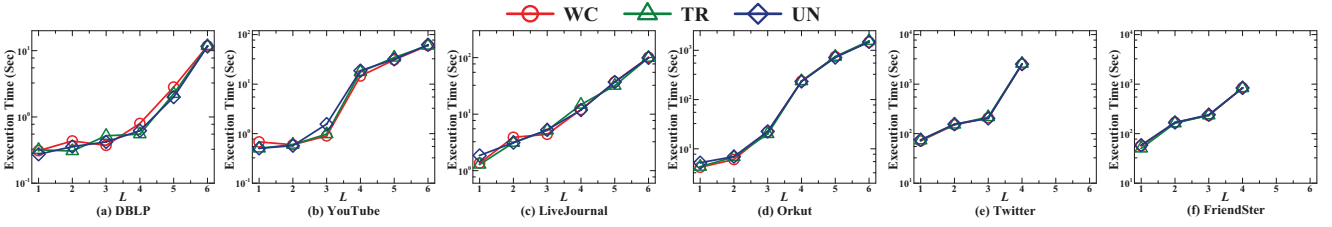}
%
\vspace{-1.5em}
\caption{Effects of Parameter \emph{L} on Execution Time.}
\label{Fig: Exp: TLL}
\end{figure*}

\begin{figure*}[t]
\centering
\includegraphics[width= \textwidth]{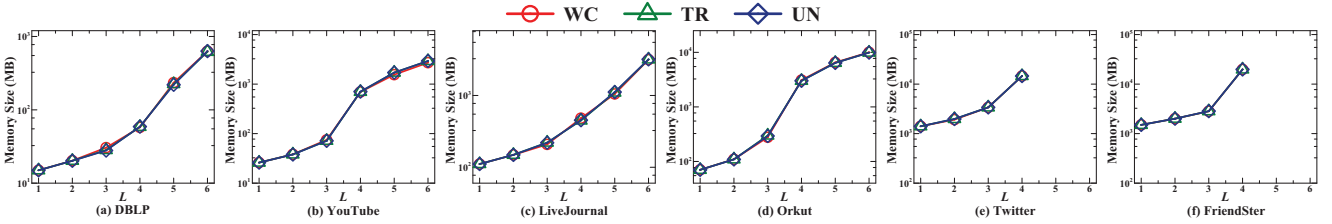}
%
\vspace{-1.5em}
\caption{Effects of Parameter \emph{L} on Memory Size.}
\label{Fig: Exp: MLL}
\end{figure*}

\begin{figure*}[t]
\centering
\includegraphics[width= \textwidth]{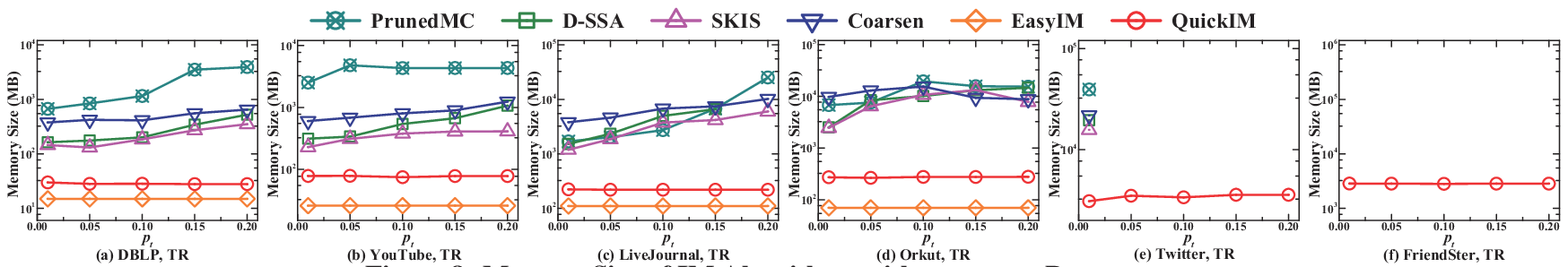}
%
\vspace{-1.5em}
\caption{Memory Size of IM Algorithms with respect to Parameter $p_{t}$.}
\label{Fig: Exp: PTM}
\end{figure*}

\begin{figure*}[t]
\centering
\includegraphics[width= \textwidth]{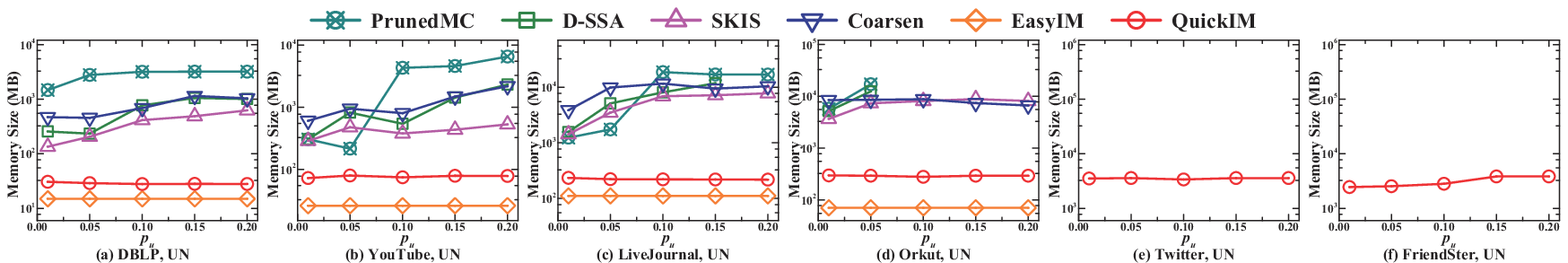}
%
\vspace{-1.5em}
\caption{Memory Size of IM Algorithms with respect to Parameter $p_{u}$.}
\label{Fig: Exp: PUM}
\end{figure*}

\subsection{Proof of Lemma~5}
\begin{lemma}
We have
\label{Lem: XRwIW}
$I_{G}(u, v) = \sum_{t=1}^{h_{uv}} X^{\langle t \rangle}_{G}(u, v)$, and $W_{G}(u, v) \break = \sum_{t=1}^{h_{uv}} t X^{\langle t \rangle}_{G}(u, v)$.
\end{lemma}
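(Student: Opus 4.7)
The plan is to prove both identities by partitioning the sample space $\mathcal{G}^L$ according to the number of walks from $u$ to $v$ (of length at most $L$) that are embedded in each possible world, and then using a counting/reordering argument.

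First I would set up notation. For a possible world $g^L \in \mathcal{G}^L$, let $N(g^L)$ denote the number of walks in $\{W_1, W_2, \dots, W_{h_{uv}}\}$ that are embedded in $g^L$. By definition of $X^{\langle t \rangle}_G(u, v)$, the sets $\mathcal{G}^L_t = \{g^L : N(g^L) = t\}$ for $t = 0, 1, \dots, h_{uv}$ partition $\mathcal{G}^L$, and $X^{\langle t \rangle}_G(u, v) = \sum_{g^L \in \mathcal{G}^L_t} \Pr(g^L)$.

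For the first identity, I would observe that $\bigvee_{i=1}^{h_{uv}} W_i$ holds in $g^L$ precisely when $N(g^L) \geq 1$. Hence, using Lemma~4,
\begin{equation*}
I_G(u, v) \;=\; \Pr\!\left(\bigvee_{i=1}^{h_{uv}} W_i\right) \;=\; \sum_{t=1}^{h_{uv}} \sum_{g^L \in \mathcal{G}^L_t} \Pr(g^L) \;=\; \sum_{t=1}^{h_{uv}} X^{\langle t \rangle}_G(u, v).
\end{equation*}

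For the second identity, I would rewrite each $\Pr(W_i)$ as a sum over possible worlds embedding $W_i$, then interchange summation to group by possible world rather than by walk:
\begin{equation*}
W_G(u, v) \;=\; \sum_{i=1}^{h_{uv}} \Pr(W_i) \;=\; \sum_{i=1}^{h_{uv}} \sum_{g^L : W_i \text{ embedded in } g^L} \!\!\!\! \Pr(g^L) \;=\; \sum_{g^L \in \mathcal{G}^L} N(g^L)\, \Pr(g^L),
\end{equation*}
since for each $g^L$ the inner summation counts it once for every walk it embeds. Partitioning by $N(g^L) = t$ and noting the $t = 0$ term vanishes gives
\begin{equation*}
W_G(u, v) \;=\; \sum_{t=0}^{h_{uv}} t \sum_{g^L \in \mathcal{G}^L_t} \Pr(g^L) \;=\; \sum_{t=1}^{h_{uv}} t\, X^{\langle t \rangle}_G(u, v).
\end{equation*}

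There is no real obstacle here; the argument is essentially a double-counting/Fubini exchange. The only subtlety worth being careful about is to justify the rewriting $\Pr(W_i) = \sum_{g^L \text{ embedding } W_i} \Pr(g^L)$, which follows directly from the definition of $\Pr(W_i)$ given just before Lemma~2 in the excerpt, together with the fact that $\Pr(\cdot)$ is a probability measure on the discrete sample space $\mathcal{G}^L$.
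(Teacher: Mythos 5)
Your proof is correct and follows essentially the same route as the paper: partition the sample space $\mathcal{G}^L$ by the number of embedded walks from $u$ to $v$, obtain the first identity from Lemma~4 plus that partition, and obtain the second by a double-counting exchange in which each possible world embedding $t$ walks contributes $t$ copies of $\Pr(g^L)$ to $\sum_i \Pr(W_i)$. Your write-up is, if anything, slightly more explicit about the Fubini step than the paper's.
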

\begin{proof}
For $1 \leq i \leq h_{uv}$, let $\Upsilon_{i} \subseteq {\G}^{L}$ denote the set of all possible worlds $g^{L}$ in which there exist $i$ walks from $u$ to $v$.
We easily have $X^{\langle t \rangle}_{G}(u, v) = \sum_{g^{L} \in \Upsilon_{i}} \Pr(g^{L})$.
For any $i \neq j$, we have $\Upsilon_{i} \cap \Upsilon_{j} = \emptyset$.
For any possible world $g^{L}$ that embeds at least one walk from $u$ to $v$, there must exist $i' \in \mathbb{N}$ such that $g^{L} \in \Upsilon_{i'}$.
Thus, $\Upsilon_{1}, \Upsilon_{2}, \dots, \Upsilon_{h_{uv}}$ is a division of all possible worlds that embeds walks from $u$ to $v$.
By Lemma~4, we have
\begin{equation*}
I_{G}(u, v) = \Pr(\bigvee_{i = 1}^{h_{uv}} W_i) = \!\!\!\!\!\! \sum_{1 \leq i \leq h_{uv}} \sum_{g^{L} \in \Upsilon_{i}} \Pr(g^{L}) = \sum_{t=1}^{h_{uv}} X^{\langle t \rangle}_{G}(u, v).
\end{equation*}

For $W_{G}(u, v)$, we have
\begin{equation*}
W_G(u, v) = \sum_{i=1}^{h_{uv}} \Pr(W_{i}) = \sum_{i}^{h_{uv}} \sum_{g^L \in \G| g^L \text{ embeds walk} W_i} \Pr(g^L).
\end{equation*}
For each possible world $g^{L}$, if there exists $t$ walks on $g^{L}$ from $u$ to $v$, the probability $\Pr(g^L)$ will be counted $t$ times in $W_{G}(u, v)$
Therefore, we easily have
\begin{equation*}
W_G(u, v) = \sum_{t=1}^{h_{uv}} t \sum _{g^L \in \Upsilon_{t}} \Pr(g^L) =  \sum_{t=1}^{h_{uv}} t X^{\langle t \rangle}_{G}(u, v).
\end{equation*}
Thus, the lemma holds.
\end{proof}

\subsection{Proof of Lemma~6}

\begin{lemma}
\label{Lem: InfluInw}
Let $w$ be a vertex of $G$. Let $G_1$ be the graph obtained by removing all the incident edges of $w$ from $G$. Let $G_2$ be the graph obtained by removing all the out-edges of $w$ from $G$. For any vertex $u \neq w$, we have
$0 \leq |I_{G_{1}}(u) - I_{G_{2}}(u)| \leq 1$ and $|\widehat{I}_{G_1}(u)  - \widehat{I}_{G_2}(u) | \leq  p_{m}^{3} h_{uw} 2^{h_{uw}}$, where $h_{uw}$ is the number of walks from $u$ to $w$.
\end{lemma}
\begin{proof}
Since the vertex $w$ have no out-edge in both $G_1$ and $G_2$, there exist no path and walk from a vertex $u$ to other vertex $v$ via $w$.
Therefore, for each vertex $u \in V(G) $ and $u \neq w$, we have $I_{G_1}(u, v) = I_{G_2}(u, v)$ and $W_{G_1}(u, v) = W_{G_2}(u, v)$ for each vertex $v \neq w$.
On the graph $G_1$, since $w$ has no in-edge, we have $I_{G_1}(u, w) = 0$ and $W_{G_1}(u, w) = 0$ for each vertex $u \neq w$.
Thus, for each vertex $u \neq w$, $I_{G_{1}}(u) - I_{G_{2}}(u) = I_{G_2}(u, w)$.
By Eq.~\eqref{Eqn: IGSv}, we always have $0 \leq |I_{G_{1}}(u) - I_{G_{2}}(u)| \leq 1$.
Meanwhile, we have $|\widehat{I}_{G_1}(u)  - \widehat{I}_{G_2}(u)| =  I_{G_2}(u, w)$.
By Eq.~\eqref{Eqn: Epsiuv} and Eq.~\eqref{Eqn: IGuBound}, we always have $|\widehat{I}_{G_1}(u)  - \widehat{I}_{G_2}(u) | \leq  p_{m}^{3} h_{uw} 2^{h_{uw}}$.
\end{proof}

\section{Score Updating Procedures}

We present the details of the basic score updating method proposed in Section~\ref{Sec: ScrUpt-1}.

\myskip
\noindent{\textbf{\underline{Score Update Procedure.}}}
We present the \textsf{ScoreUpd-Basic} procedure to update scores.
The procedure takes as input the graph $G$, the maximum walk length $L$, the iteration number $t$, and the seed set $S$.
We assume that all vectors $\F^{(t)}_i$ and $\F^{(t)}$ have already been stored in the main memory.
First, we select the vertex $w^{(t)}$ with the maximum score (line~1) and prepare the matrix $\M^{(t)}$ (line~2).
Then, we invoke the \textsf{WalkPro} procedure to compute the column ${\A^{^{(t)}}}^j[*, w^{(t)}]$ for each $1 \leq j \leq L$ (line~3).
Next, we add $w^{(t)}$ to the seed set $S$ and remove all the out-edges of $w^{(t)}$ from $G$ (line~4).
We initialize $\F^{(t+1)}$ and $\F^{(t+1)}_{i}$ for all $2 \leq i \leq L$ to be $\0$ (line~5).
In fact, only the $w^{(t)}$-th element of $\F^{(t+1)}_1$ is different from that of $\F^{(t)}_1$, so we directly set $\F^{(t+1)}_1 = \F^{(t)}_1$
and set $\Delta \F^{(t)}_0$ and $\Delta \F^{(t)}_1$ to be $\0$ (line~6). After that, we compute $c^{(t)}_{0}$ and $c^{(t)}_{1}$ by Eq.~\eqref{Eqn: cxt} (line~7).

In the main loop (lines~8--13), for $2 \leq i \leq L$, we have already obtained $c^{(t)}_{j}$ for all $j \leq i-1$.
Therefore, we can update the estimated scores. Specifically, for each vertex $u \not\in S$,
we compute $\Delta \F^{(t)}_i[u]$ by Eq.~\eqref{Eqn: DeltaFiRec}, update $\F^{(t+1)}_i[u]$ to be $\F^{(t+1)}_i[u] + \Delta \F^{(t)}_i[u]$ and add $\F^{(t+1)}_i[u]$ to $\F^{(t+1)}[u]$ (lines~10--12).
After processing all vertices, we obtain the value of $c^{(t)}_{i}$ by Eq.~\eqref{Eqn: cxt} and store it in the main memory (line~13).

After the main loop, we remove all the in-edges of $w^{(t)}$ from $G$ to ensure that $w^{(t)}$ will not be accessed in the following iterations (line~14).
Finally, we store $\F^{(t+1)}$ and all $\F^{(t+1)}_i$ in the main memory for subsequent score updating (line~15).

\myskip
\noindent{\textbf{\underline{The \textsf{WalkPro} Procedure.}}}
${\A^{{(t)}}}^j[u, w^{(t)}]$ is the sum of probabilities of walks from $u$ to $w^{(t)}$ that are of length $j$.
To compute it, the \textsf{WalkPro} procedure performs a local traversal starting from $w^{(t)}$.
Initially, for each vertex $u \neq w$, we set ${\A^{{(t)}}}^j[u, w] = 0$  for $0 \leq j \leq L$ (line~1). Conceptually, we set ${\A^{{(t)}}}^{0}[w, w] = 1$ (line~2).
At the beginning, let $j = 0$. We iterate until $j = L-1$.
In iteration $j$ (lines~4--6), for each element ${\A^{{(t)}}}^j[u, w] \neq 0$, we fetch all in-neighbors $v$ of $u$ on $G$.
Obviously, the probability of all walks from $v$ to $w$ via $u$ is ${\A^{{(t)}}}^j[u, w]P_{G}(v, u)$, so we add it to ${\A^{{(t)}}}^{j+1}[v, w]$ (line~6).
After $j$ iterations, the probabilities ${\A^{{(t)}}}^j[u, w]$ for all vertices $u$ of $G$ and $1 \leq  j \leq L$ are computed and stored in the main memory (line~7).

\begin{figure}[!t]
    \centering
    \scriptsize
    \resizebox{\algswidth}{!}{
    \fbox{
    \parbox{\figwidth}{
    {
    \textbf{Procedure} \textsf{ScoreUpd-Basic}$(G, L, t, S)$
    \begin{algorithmic}[1]
    \STATE $w^{(t)} = \arg\max_{v \in V(G) - S} \F^{(t)}[v]$
    \STATE obtain the matrix $\M^{(t)}$
    \STATE invoke the procedure \textsf{WalkPro}$(G, L, w^{(t)})$
    \STATE add $w^{(t)}$ into the seed set $S$ and remove all out-going edges of $w^{(t)}$ from $G$
	\STATE $\F^{(t+1)} \gets \0$; $\F^{(t+1)}_{i} \gets \0$ for each $2 \geq i \geq L$
    \STATE  $\F^{(t+1)}_{1} \gets \F^{(t)}_{1}$; $\Delta \F^{(t)}_0, \Delta F^{(t)}_1 \gets \0$
    \STATE compute $c^{(0)}_{1}$ and $c^{(t)}_{1}$ by Eq.~\ref{Eqn: cxt}
	\FOR{$i \gets 2$ to $L$}
        \FOR{each vertex $u \in V(G) - S$}
            \STATE compute $\Delta \F^{(t)}_i[u]$ by Eq.~\ref{Eqn: DeltaFiRec}
            \STATE $\F^{(t+1)}_i[u] \gets \F^{(t)}_i[u] + \Delta \F^{(t)}_i[u]$
            \STATE $\F^{(t+1)}[u] \gets \F^{(t+1)}[u] + \F^{(t+1)}_i[u]$
        \ENDFOR
        \STATE compute $c^{(t)}_{i}$ by Eq.~\ref{Eqn: cxt}
	\ENDFOR
    \STATE remove all in-edges of $w^{(t)}$ from $G$
	\STATE store all vectors $\F^{(t+1)}$ and $\F^{(t+1)}_i$ for each $i$ in the main memory
    \end{algorithmic}
    }
    }}}
    \label{Fig: IncInf}
    \vspace{-1.5em}
\end{figure}

\begin{figure}[!t]
    \centering
    \scriptsize
    \resizebox{\algswidth}{!}{
    \fbox{
    \parbox{\figwidth}{
    {
    \textbf{Procedure} \textsf{WalkPro}$(G, L, w)$
    \begin{algorithmic}[1]
    \STATE ${\A^{{(t)}}}^j[u, w] \gets 0$ for each vertex $u \neq w$ and $0 \leq j \leq L$
	\STATE ${\A^{{(t)}}}^{0}[w, w] \gets 1$
	\FOR{$j \gets 0$ to $L-1$}
		\FOR{each element ${\A^{{(t)}}}^{j}[u, w] \neq 0$}
			\FOR{each vertex $v \in N_{G}^{I}(u)$}
				\STATE add ${\A^{{(t)}}}^{j}[u, w]P_{G}(v, u)$ onto ${\A^{{(t)}}}^{j+1}[v, w]$
			\ENDFOR
		\ENDFOR
	\ENDFOR
    \STATE store ${\A^{{(t)}}}^j[u, w]$ for all vertices $u$ of $G$ and $1 \leq j \leq L$ in the main memory
    \end{algorithmic}
    }
    }}}
    \label{Fig: ExtWPro}
    \vspace{-2em}
\end{figure}

\myskip
\noindent{\textbf{\underline{Complexity Analysis.}}}
It takes $O(Ln)$ time to compute $\Delta \F^{(t)}_i$ and update $\F^{(t)}_i$ according to Eq.~\eqref{Eqn: DeltaFiRec}.
The time to compute $c_x$ for each $1 \leq x \leq L$ is $O(d_{G}^{O}(w^{(t)}))$ by Eq.~\eqref{Eqn: cxt}.
Let $\Delta$ be the average degree of vertices in $G$. The expected time cost of \textsf{WalkPro} is $O(\Delta^{L})$.
Since each edge is traversed at most once in \textsf{WalkPro} for any $j$, the worst-case time cost of \textsf{WalkPro} is $O(Lm)$.
Since \textsf{ScoreUpd-Basic} iterates at most $L$ times, the expected and the worst-case time complexities of \textsf{ScoreUpd-Basic} are $O(L^{2}n + Ld_{G}^{O}(w) + \Delta^{L}) = O(L^{2}n + \Delta^{L})$ and $O(L^{2}n + Lm)$, respectively.
\textsf{ScoreUpd-Basic} stores $\Delta \F^{(t)}_i$ and ${\A^{{(t)}}}^j[u, w^{(t)}]$ for $1 \leq j \leq L$.
Thus, the expected space complexity of \textsf{ScoreUpd-Basic} is $O(\Delta^{L})$. In the worst-case, ${\A^{{(t)}}}^j[u, w^{(t)}]$ has $n$ elements for each $1 \leq j \leq L$, so the worst-case space complexity of \textsf{ScoreUpd-Basic} is $O(Ln)$.

When we apply \textsf{ScoreUpd-Basic} in all of the $k$ iterations. The total expected and the total wort-case time costs are indeed $(k(L^{2}n + \Delta^{L}))$ and $O(k(L^{2}n + Lm))$, respectively.
Meanwhile, the expected space complexity is still $O(\Delta^{L})$ since we do not need to reserve any data in the main memory after the procedure.

\section{Additional Experiments}

We present some additional experimental results in this section.

\myskip
\noindent{\textbf{\underline{Execution Time v.s.~Parameter $L$.}}}
The execution time of \textsf{QuickIM} w.r.t.~parameter $L$ is shown in Fig.~\ref{Fig: Exp: TLL}.
We have the following observations:
1) The execution time of \textsf{QuickIM} grows significantly as $L$ becomes larger.
This is because the time complexity of \textsf{QuickIM} is $O(Lm + kLn + kL^{2}n' + k\Delta^{L})$.
The item $k \Delta^L$ grows exponentially to $L$, so the time cost grows fast.
2) In different probability assigning models, the execution time of \textsf{QuickIM} is very close.
The reasons have been explained clearly in Section~\ref{Sec: PerEva-2}.
Once again, this verifies the robustness of \textsf{QuickIM} in terms of time efficiency.

 \myskip
\noindent{\textbf{\underline{Memory Overhead v.s.~Parameter $L$.}}}
The memory overhead of \textsf{QuickIM} with respect to parameter $L$ is shown in Fig.~\ref{Fig: Exp: MLL}.
We have the following observations:
1) The memory overhead of \textsf{QuickIM} also grows significantly as $L$ gets larger.
This is because the space complexity of \textsf{QuickIM} is $O(Ln + k\Delta^{L})$.
The item $k \Delta^L$ grows exponentially to $L$, so the space cost also grows fast.
2) In different probability assigning models, the memory overhead of \textsf{QuickIM} is very close.
The reasons have been explained clearly in Section~\ref{Sec: PerEva-2}.
Once again, this verifies the robustness of \textsf{QuickIM} in terms of memory efficiency.

Notice that, although the time and the space costs grow significantly with respect to $L$, $L = 3$ is sufficient to
produce high quality results as we verified in Section~\ref{Sec: PerEva-2}.

\myskip
\noindent{\textbf{\underline{Memory Overhead in Robustness Evaluation.}}}
In this experiment, we further examine the robustness of the IM algorithms against influence probabilities.
Let $k = 100$. We vary the parameter $p_{t}$ in the \textsl{TR} model and the parameter $p_{u}$ in the \textsl{UN} model from 0.01 to 0.2.
The memory overheads of the IM algorithms with respect to $p_{t}$ and $p_{u}$ are illustrated in Fig.~\ref{Fig: Exp: PTM} and Fig.~\ref{Fig: Exp: PUM}, respectively.
We find that the memory overhead of \textsf{QuickIM} is independent of $p_{t}$ and $p_{u}$.
However, the memory overheads of all the sampling-based algorithms \textsf{PrunedMC}, \textsf{D-SSA}, \textsf{SKIS} and \textsf{Coarsen} grow exponentially as $p_{t}$ or $p_{u}$ gets larger.
The reasons have been clarified clearly in Section~\ref{Sec: PerEva-2}. The memory overhead of \textsf{EasyIM} is also independent of $p_{t}$ and $p_{u}$.

\end{document}